\newtheorem{theorem}{Theorem}
\newtheorem{definition}{Definition}
\newtheorem{lemma}[theorem]{Lemma}
\newcommand{\di}{\partial}
\newcommand{\res}{\omega}
\newcommand{\T}{\footnotesize{\text{T}}}
\renewcommand{\H}{\footnotesize{\textup{H}}}
\def\be{\begin{equation}}
\def\ee{\end{equation}}
\def\bea {\begin{eqnarray}}
\def\eea {\end{eqnarray}}
\begin{document}
\title{Velocity-dependent deformations of the energy spectrum of a quantum cavity from Lorentz symmetry violations}

\author{Jarod George Kelly} \email{jarod.kelly@unb.ca} \affiliation{Department of Mathematics and Statistics, University of New Brunswick, Fredericton, NB, Canada E3B 5A3}

\author{Sanjeev S.\ Seahra} \email{sseahra@unb.ca} \affiliation{Department of Mathematics and Statistics, University of New Brunswick, Fredericton, NB, Canada E3B 5A3}

\begin{abstract}

Several approaches to quantum gravity suggest that Lorentz invariance will be broken at high energy.  This can lead to modified dispersion relations for wave propagation, which can be concretely realized in effective field theories where the equation of motion involves higher order spatial derivatives in a preferred frame.  We consider such a model in the presence of a finite cavity whose walls follow parallel inertial trajectories of speed $v$ with respect to the preferred frame.  We find evidence that when the cavity wall speed exceeds the phase velocity, the system becomes classically unstable.  For dispersion relations that do not lead to an instability, the energy levels of the cavity are non-trivial functions of $v$.  In other words, an observer could in principle measure their velocity with respect to the preferred frame by studying the energy spectra of a quantum cavity, which is a stark violation of the principle of relativity.  We also find that the energy levels of the cavity become infinitely large as its velocity approaches light speed.  

\end{abstract}

\maketitle

\section{Introduction}

Lorentz symmetry is one of the cornerstones of modern physics.   The mathematical and physical properties of fundamental theories are severely constrained by the simple axioms that there are no preferred inertial observers in a sufficiently small patch of spacetime, and that there is a maximum universal speed.  However, there is an inherent tension between Lorentz invariance and quantum theories that hypothesize that spacetime is fundamentally discrete, or that there exists a preferred rest frame for physical theories.  This has led many authors to speculate that Lorentz symmetry is an emergent phenomena that only holds approximately at large scales, and it is therefore possible for very high energy or small-scale experiments to detect Lorentz invariance violations (see refs.~\cite{Mattingly:2005re,Vucetich:2005ra,Russell:2006ge,Liberati:2012tb,Liberati:2013xla,Liberati:2015dja} for reviews).  Specific models exhibiting Lorentz symmetry violation have been obtained in the context of string theory \cite{Kostelecky:1988zi,Kostelecky:1989jp,Kostelecky:1989jw,Kostelecky:1991ak,Kostelecky:1995qk,Kostelecky:1999mu,Sudarsky:2002zy,Jenkins:2003hw}, loop quantum gravity \cite{Gambini:1998it,Alfaro:1999wd,Alfaro:2001rb,Alfaro:2002xz}, and non-commutative geometry \cite{Seiberg:1999vs, Hayakawa:1999yt, Mocioiu:2000ip, Guralnik:2001ax,Douglas:2001ba,Jurco:2001rq,Carlson:2001sw,Carroll:2001ws,Anisimov:2001zc} (for other examples see \cite{Mattingly:2005re} and references therein).

A useful technique for modelling the effects of small scale Lorentz symmetry violation in a model independent way is effective field theory \cite{Polchinski:1992ed}, which involves adding Lorentz-symmetry breaking terms to the Lagrangian of an otherwise Lorentz invariant theory (see Refs.~\cite{Ng:1993jb,Shiokawa:2000em,Jacobson:2000xp,AmelinoCamelia:2000ge,AmelinoCamelia:2000mn,Bruno:2001mw,Dowker:2003hb,KowalskiGlikman:2004qa,Lim:2004js,Jacobson:2004ts} for other phenomenological approaches). These symmetry breaking terms can be categorized by their mass dimension; for example, the standard kinetic term in scalar field theory $\di_{\alpha} \phi \di^{\alpha}\phi$ is dimension 4, while a term of the form $M_{\star}^{-2}\di_{\alpha}\di_{\beta} \phi \, \di^{\alpha} \di^\beta\phi$ is dimension 6, where $M_{\star}$ is some fixed mass scale.  From a phenomenological point of view, the higher the dimension of a given term (or operator) in an effective field theory, the less important it is at low energies.  For this reason, much of the work looking for experimental signatures of Lorentz invariance violations has concentrated on constraining effective field theories with symmetry-breaking terms of dimension $\lesssim 4$.  


One of the most popular examples of a Lorentz violating effective field theory is the Standard Model Extension (SME) \cite{Colladay:1996iz,Colladay:1998fq,Kostelecky:2003fs,Bluhm:2005uj,Greenberg:2002uu,Greenberg:2011cw}.  A version of the SME incorporating such symmetry-breaking terms is called the ``minimal SME'' \cite{Colladay:1998fq}.  The photonic sector of the theory is described by the Lagrangian $\mathcal{L} = -\frac{1}{4} F_{\mu\nu} F^{\mu\nu} - \frac{1}{4} (k_{F})_{\kappa\lambda\mu\nu}F^{\kappa\lambda}F^{\mu\nu}$.  Here, $(k_{F})_{\kappa\lambda\mu\nu}$ is a non-dynamical background tensor field that parameterized Lorentz symmetry breaking.  For a comprehensive summary of current theoretical and experimental limits on Lorentz violating coefficients in the mSME see \cite{Kostelecky:2008ts}.


One can also construct effective field theories with unequal numbers of temporal and spatial derivatives (see for e.g. \cite{Colladay:1998fq,Jacobson:2000xp,ArkaniHamed:2003uy,Lim:2004js,Jacobson:2004ts,Kostelecky:2009zp}).  In such models, the wave 4-vector $k^{\alpha}$ appearing in plane wave solutions $\phi \sim e^{i k_{\alpha} x^{\alpha}}$ for massless fields does not in general satisfy the Lorentz-invariant dispersion relation $k_{\alpha} k^{\alpha} = 0$.  Scalar field models of this type have been used to analyze the effects of small scale Lorentz invariance violations on the Hawking radiation from black holes \cite{PhysRevD.51.2827,Corley:1996ar} and the Unruh effect \cite{Husain:2015tna}.  Modified dispersion relations can be directly constrained by astrophysical observations, since they imply that different frequency components of an astrophysical signal will travel at different speeds and hence arrive at a detector at different times.  Because astrophysical signals travel large distances to reach us, this frequency dependent time delay is in principle detectable even if velocity differences are very small.  This idea has been used to analyze high-frequency electromagnetic radiation from gamma ray bursts and hence place limits on non-Lorentz invariant dispersion relations \cite{Galaverni:2008yj,Stecker:2011ps,Vasileiou:2013vra,Kislat:2016ehx,Wei:2017zuu,Lang:2018yog}.

While astrophysical observations represent the best means we have to constrain higher dimensional Lorentz-violating operators, it is obviously useful to have independent and reproducible laboratory-based techniques for probing these effects.  These include clock-comparison experiments \cite{Kostelecky:1999mr,Mewes:2003mw,Russell:2004ne,Kostelecky:2018fmc}, Penning traps \cite{Bluhm:1997ci,Bluhm:1997qb,Russell:2004ne} and doppler shift experiments  (for a more comprehensive discussion, see \cite{Mattingly:2005re,Vucetich:2005ra,Russell:2006ge,Liberati:2012tb,Liberati:2013xla,Liberati:2015dja}).


Below, we concentrate on tests involving microwave and optical cavities.  Such experiments are essentially modern versions of the classic Michelson-Morley and Kennedy-Thorndike experiments.  The basic idea is to determine if the resonant frequencies of radiation confined to a cavity depend on the cavity's orientation in space or state of motion.  Some of the earliest experiments of this type directly addressed whether the speed of light was independent of direction \cite{Brillet:1979zz} or the state of motion of a cavity \cite{Hils:1990nrd,Braxmaier:2001wu}.  The change in resonant frequency of a cavity in the minimal SME was calculated to leading order in Ref.~\cite{Kostelecky:2002hh}.  Subsequent experiments  \cite{Muller:2002uk,Muller:2003zzc,Muller:2004zp,Wolf:2004gg,Herrmann:2005qe,Antonini:2005yb,Stanwix:2006jb,Herrmann:2006zza,Herrmann:2009zzb,Nagel:2014aga} used this result to directly constrain the components of the $(k_{F})_{\kappa\lambda\mu\nu}$ tensor in the Sun's rest frame.  The effects of higher mass dimension terms in the SME on cavity experiments were derived in \cite{Kostelecky:2009zp,Mewes:2012sm} in the context of perturbation theory and considering the effects of small boosts with $v/c \sim 10^{-4}$; the effects were subsequently probed experimentally in \cite{Parker:2015ena}.


In this work, we carefully calculate the energy levels of a quantum cavity in an effective field theory with higher derivatives, and hence modified dispersion.  We work with a scalar field for simplicity, and pay close attention to boundary conditions.  Unlike previous studies \cite{Kostelecky:2002hh,Kostelecky:2009zp,Mewes:2012sm}, we concentrate on the determination the spectrum of the cavity as a function of its inertial velocity with respect to a preferred frame.   We do not \emph{a priori} assume the cavity's velocity is small, and we present a fully quantum and self-consistent calculation.

In a Lorentz invariant theory, one would expect a cavity's energy spectrum to be velocity independent; i.e., any observer comoving with the cavity must measure the same discrete energy spectrum for longitudinal modes.  This is a direct consequence of the principle of relativity, which states that the results of all local experiments in a non-accelerating laboratory are independent of the laboratory's velocity.  We find that the opposite is basically true for scalar fields with non-Lorentz invariant dispersion relations arising from an effective field theory.  In order to maintain invariance under parity, we only consider effective actions with even dimension terms (see for e.g. \cite{Kostelecky:2000mm,Greenberg:2002uu,Greenberg:2011cw} for models incorporating CPT violations). We find evidence that some dispersion relations lead to the classical instability of the system.  For dispersion relations giving rise to stable dynamics, we see that higher dimension operators in the scalar field action imply that an observer can measure their velocity with respect to the preferred frame by measuring the energy levels of a quantum cavity.

In section \ref{sec:classical}, we present the classical formalism of our model including the action, Hamiltonian, and equation of motion.  In section \ref{sec:Fourier}, we present a Fourier-like mode decomposition of the scalar field that is used to quantize the model in section \ref{sec:quantization}.  The problem of quantization is made significantly more complicated by the fact that the Fourier modes of section \ref{sec:Fourier} are not orthogonal under the conserved inner product for the problem.  

In section \ref{sec:Bogliobov}, we rectify this by deducing the conditions under which we can find Bogliobov transformations that result in an orthogonal mode basis and a diagonal Hamiltonian.  Roughly speaking, such a transformation exists if the cavity walls do not exceed the phase velocity of any of the Fourier modes in the scalar field spectrum.  If this condition is satisfied, the cavity is called ``subsonic'' and its energy spectrum is simply given by integer multiples of the resonant or normal mode frequencies of the cavity (which are essentially the diagonal entries of the Hamiltonian operator in the new basis).  If the condition is not satisfied, we call the cavity ``supersonic'' and we are not guaranteed the existence of the Bogliobov transformation discussed in section \ref{sec:Bogliobov}, which in turn implies that the energy spectrum of the cavity is not well-defined.  Whether or not it is even possible for a cavity to be supersonic depends on the scalar field dispersion relation, or, equivalently, the nature of the higher dimension terms added to the action.  Technical proofs of the results of section \ref{sec:Bogliobov} are given in appendix \ref{sec:existence}.

In section \ref{sec:subsonic}, we calculate the resonant frequencies of subsonic cavities in various limits and for various choices of Lorentz-symmetry breaking terms.  We find if the cavity's velocity $v$ with respect to the preferred frame is small, the perturbation to the energy levels is $\mathcal{O}(v^{2})$.  Conversely, for $|v| \rightarrow 1$ the resonant frequencies are generally divergent, implying that it takes an infinite amount of energy to excite the cavity above its ground state.  In section \ref{sec:supersonic}, we revisit the case of the supersonic cavity and present some numeric and analytic evidence that the cavity is classically unstable under these conditions (though we do not have a proof of this).  In section \ref{sec:experiments}, we discuss the implications of our results for rotating cavity experiments under the plausible (but unproven) assumption that our scalar field results will generalize to the electromagnetic case.   Despite the exceptionally low error in these experiments, we find that the associated constraints placed on higher dimension Lorentz-violating operators are not competitive with constraints derived astrophysical tests.  

Section \ref{sec:discussion} is reserved for a discussion of our results.

\section{Classical formalism}\label{sec:classical}

In this paper, we consider a scalar field propagating in flat space with metric
\begin{equation}
	ds^{2} = -dT^{2} + dX^{2} + dY^{2} + dZ^{2}. 
\end{equation}
In these coordinates, the scalar field is assumed to be governed by the action
\begin{equation}
	S = \frac{1}{2} \int d^{4}X [(\di_{T}\phi)^{2} - (\vec{D}\phi)^{2} ], 
\end{equation}
where
\begin{equation}
	\vec{D} = \vec\nabla ( c_{1} + c_{2} L_{\star}^{2} \vec{\nabla} ^{2} + \cdots + c_{M} L_{\star}^{2M-2} \vec{\nabla} ^{2M-2} ).
\end{equation}
Here, $M \ge 2$ is an integer, the coefficients $c_{i}$ are dimensionless, while $L_{\star}$ is a fixed length scale.  In this frame, the action is invariant under spatial parity $(X,Y,Z) \mapsto (-X,-Y,-Z)$, spatial rotations, and time inversion $T \mapsto -T$.  Below, we take $c_{1} = 1$.  In the language of effective field theory, we are considering CPT invariant deformations of the Lorentz invariant massless scalar action by the addition of dimension $4,6,\ldots,4M$ operators.

We assume that the scalar field is confined to a semi-infinite cavity $\Omega$ (shown in figure \ref{fig:geometry 1}) whose walls are the surfaces:
\begin{equation}
	\Sigma_\text{L}: \,\, X = -vT, \quad \Sigma_\text{R}: \,\, X = -vT+L.
\end{equation}
That is, the walls of the cavity follow inertial trajectories with velocity $dX/dT = -v$ and are separated by a distance of $L$ as measured along a $T =$ constant surface.  Therefore, the proper length $L_{0}$ of the cavity (i.e., it's length as measured in its rest frame) is
\begin{equation}\label{eq:Lorentz contraction}
	L_{0} = \gamma L, \quad \gamma = \frac{1}{\sqrt{1-v^{2}}}.
\end{equation}
\begin{figure*}
\begin{center}
\includegraphics[width=\textwidth]{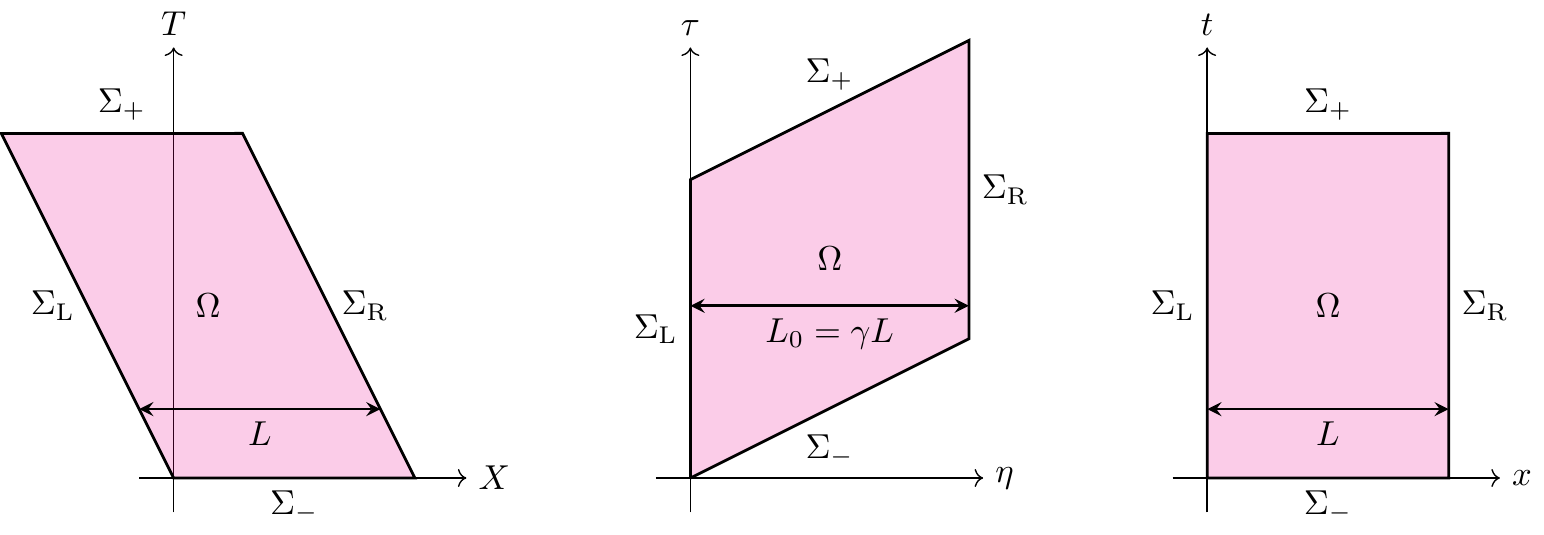}
\end{center}
\caption{Cavity geometry in various coordinate systems}\label{fig:geometry 1}
\end{figure*}

We will be primarily interested in the longitudinal modes of the cavity that are independent of $Y$ and $Z$.  The reduced action governing these modes is
\begin{equation}
	S = \frac{1}{2} \int_{\Omega} dT \, dX \left\{  (\di_{T}\phi)^{2} - [\hat{F}(\di_{X})\phi]^{2} \right\}, 
\end{equation}
where 
\begin{align}\nonumber
	\hat{F}(\di_{X}) & = \di_{X} + c_{2} L_{\star}^{2} \di_{X}^{3}+ \cdots + c_{M} L_{\star}^{2M-2}  \di_{X}^{2M-1} \\ &= \sum_{i=1}^{M} c_{i} L_{\star}^{2i-2} \di^{2i-1}_{X}.\label{eq:power F}
\end{align}
If $c_{2} = c_{3} = \cdots = c_{M} = 0$ (or if $L_{\star} = 0$), then the action takes the familiar form
\begin{equation}\label{eq:TX action}
	S = \frac{1}{2} \int_{\Omega} dT \, dX \left[  (\di_{T}\phi)^{2} - (\di_{X}\phi)^{2} \right]. 
\end{equation}

We now examine the behaviour of the reduced action under Lorentz transformations.  Specifically, we perform a boost to a frame where the walls $\Sigma_\text{L}$ and $\Sigma_\text{R}$ are at rest:
\begin{gather}\nonumber
	\tau = \gamma(T+vX), \quad \eta = \gamma(vT+X), \\ ds^{2} = -d\tau^{2}+d\eta^{2} + dY^{2} + dZ^{2}.
\end{gather}
The transformed action is 
\begin{equation}\label{eq:tau eta action}
	S = \frac{1}{2} \int_{\Omega} d\tau \, d\eta \left\{  [\gamma(\di_{\tau}+v\di_{\eta})\phi]^{2} - [\hat{F}(\gamma(v\di_{\tau}+\di_{\eta}))\phi]^{2} \right\}. 
\end{equation}
If $\hat{F}(\di_{X}) = \di_{X}$, then the action is
\begin{equation}
	S = \frac{1}{2} \int_{\Omega} d\tau \, d\eta \left[  (\di_{\tau}\phi)^{2} - (\di_{\eta}\phi)^{2} \right]; 
\end{equation}
i.e., the action is form-invariant under Lorentz transformations.  However, if $\hat{F}(\di_{X}) \ne \di_{X}$, then the action is not Lorentz invariant, and will involve higher order time derivatives.  Comparison of (\ref{eq:TX action}) and (\ref{eq:tau eta action}) clearly demonstrates that when $\hat{F}(\di_{X}) \ne \di_{X}$, the $(T,X)$ coordinates define a preferred frame.

While it is certainly possible to quantize theories with higher order time derivatives in the action, it is cumbersome.  To avoid this complication, we consider a Galilean change of coordinates
\begin{gather}\nonumber
	t = T, \quad x = X + vT \\
	ds^{2} = -dt^{2} + (dx-v \, dt)^{2}+dY^{2}+dZ^{2}.\label{eq:metric}
\end{gather}
In these coordinates, the reduced action is
\begin{equation}\label{eq:action}
	S = \frac{1}{2} \int dt \int_{0}^{L} dx  \left\{ [(\di_{t}+v \di_{x})\phi]^{2} - (\hat{F}\phi)^{2} \right\}. 
\end{equation}
Here and below, $\hat{F}$ is understood to refer to $\hat{F}(\di_{x})$.  Clearly, the action is not form invariant under Galilean transformations either.  However, this coordinate system has the advantage that the cavity walls are stationary and the action only involves first order time derivatives.

We now consider the variation of action with respect to $\phi$, subject to the usual assumption that the variation vanishes on the boundary of region of interest: $\delta\phi|_{\di\Omega} = 0$.  We obtain, after integration by parts
\begin{equation}\label{eq:variation}
	\delta S = - \int dt \int_{0}^{L} dx  \, [ \delta \phi (\di_{t}+v \di_{x})^{2}\phi 
	 +(\hat{F}\delta\phi) (\hat{F}\phi) ]. 
\end{equation}
Now, for any functions $U$ and $V$ satisfying the boundary conditions
\begin{equation}\label{eq:BCs}
	\di^{(2i)}_{x} U |_{x=0,L} = \di^{(2i)}_{x} V |_{x=0,L} = 0, \quad i = 0 \ldots M,
\end{equation}
we have the integration by parts identity\footnote{The identity also holds if all the odd derivatives of $u$ and $v$ vanish on $\Sigma_\text{L}$ and $\Sigma_\text{R}$.}
\begin{equation}\label{eq:integration by parts identity}
	\int_{0}^{L} dx \, (\hat{F}V) (\hat{F} U) = - \int_{0}^{L} dx \, U \hat{F}^{2}V.
\end{equation}
Assuming that $\phi$ and $\delta\phi$ satisfy the boundary conditions (\ref{eq:BCs}) and setting $\delta S = 0$, this identity yields the equation of motion
\begin{equation}\label{eq:wave equation}
	 (\di_{t}+v \di_{x})^{2}\phi-\hat{F}^{2}(\di_{x})\phi = 0, \quad  \di^{(2i)}_{x} \phi |_{x=0,L} = 0,
\end{equation}
where $i = 0 \ldots M$.  \footnote{As an aside, we note that if we set $v = 0$ and send $L \rightarrow \infty$, we recover the equation of motion of the scalar in the preferred frame in the absence of finite boundaries:
\begin{equation}
	 \di_{t}^{2}\phi-\hat{F}^{2}(\di_{x})\phi = 0.
\end{equation}
Plane wave solutions $\phi = e^{-i\omega t + i kx}$ satisfy the dispersion relation
\begin{equation}
	\omega^{2} = k^{2} - 2c_{2}L_{\star}^{2} k^{4}+\cdots
\end{equation}
We can compare this to a typical parameterization of modified dispersion relations for massless bosons \cite{Vasileiou:2013vra}:
\begin{equation}
	\omega^{2} = k^{2} \pm \frac{k^{4}}{E_\text{QG}^{2}} + \cdots
\end{equation}
where $E_\text{QG}$ is some quantum gravity energy scale.
Comparison of these two equations yields $E_\text{QG} = (2|c_{2}|)^{-1/2} L_{\star}^{-1}$.  We will use this in section \ref{sec:experiments} to compare our results to astrophysical constraints on Lorentz violations.}

We now define an inner product $(\cdot,\cdot)$ between two solutions $\phi$ and $\psi$ of (\ref{eq:wave equation}) as
\begin{align}
	\nonumber (\phi,\psi) & = i \int_{0}^{L} dx [\psi^{*} (\di_{t}+v \di_{x}) \phi - \phi (\di_{t}+v \di_{x}) \psi^* ]  \\ & = i \int_{0}^{L} dx \, \psi^{*} \overleftrightarrow{\di_{T}}\phi.\label{eq:inner product}
\end{align}
Useful properties of this inner product include
\begin{gather}\nonumber
	(\mu \phi,\nu\psi) = \mu\nu^{*}(\phi,\psi), \\ (\phi^{*},\psi^{*}) = -(\phi,\psi)^{*} = -(\psi,\phi), \label{eq:identities}
\end{gather}
where $\mu$ and $\nu$ are constant scalars.  The identity (\ref{eq:integration by parts identity}) and wave equation (\ref{eq:wave equation}) can be used to explicitly show that this inner product is conserved in time
\begin{equation}
	\frac{d}{dt}(\phi,\psi) = 0.
\end{equation}
Another useful identity is
\begin{multline}\label{eq:identity 2}
	(\phi,\dot\psi) = i \int dx [ (\di_{t}\psi^{*})(\di_{t}\phi) +  \\ (\hat{F}\psi^{*})(\hat{F}\phi)  - v^{2} (\di_{x}\phi)( \di_{x}\psi^{*})],
\end{multline}
where we use an overdot to denote $\di/\di t$ and have again assumed that $\phi$ and $\psi$ are solutions of the wave equation.  This leads to the facts:
\begin{equation}\label{eq:identity 3}
	(\phi,\dot\psi) = -(\psi,\dot\phi)^{*}=(\psi^{*},\dot\phi^{*}),
\end{equation}
which we will make use of below.

We now pass over to the Hamiltonian formalism.  From (\ref{eq:action}), we see that the Lagrangian of the system is
\begin{equation}
	 \mathcal{L} = \frac{1}{2} \left\{ [(\di_{t}+v \di_{x})\phi]^{2} - [\hat{F}(\di_{x})\phi]^{2} \right\}.
\end{equation}
The momentum conjugate to $\phi$ is
\begin{equation}
	\pi = \frac{\delta\mathcal{L}}{\delta(\di_{t}\phi)} =  (\di_{t}+v \di_{x})\phi = \di_{T} \phi,
\end{equation}
and we have the usual Poisson brackets
\begin{gather}
	\nonumber \{ \phi(t,x), \phi(t,x') \}=\{ \pi(t,x), \pi(t,x') \}=0, \\ \{ \phi(t,x), \pi(t,x') \} = \delta(x-x').
\end{gather}
After performing the standard Legendre transformation, we find that the full Hamiltonian is
\begin{align}
	\nonumber H & = \frac{1}{2} \int_{0}^{L} dx [ \pi^{2} + (\hat{F}\phi)^{2} - 2 v\pi \di_{x}\phi ]  \\ & = \frac{1}{2} \int_{0}^{L} dx [ (\di_{t}\phi)^{2} + (\hat{F}\phi)^{2} - v^{2} (\di_{x} \phi)^{2} ]. 
\end{align}
Comparison of this expression with (\ref{eq:identity 2}) yields the following compact formula:
\begin{equation}\label{eq:classical Hamiltonian}
	H = -\frac{i}{2} ( \phi,\dot\phi).
\end{equation}
Now, if $\phi$ is a solution of (\ref{eq:wave equation}), then $\dot\phi$ is also a solution satisfying the same boundary conditions from which it follows that the Hamiltonian is conserved $dH/dt = 0$.  Furthermore, if we set $\psi = \phi$ in (\ref{eq:identity 3}), then we see that $(\phi,\dot\phi)$ is imaginary, implying that $H$ is real, as required.

Before moving on to the quantization of this system, we note that the Hamiltonian defined above is the generator of $t$ evolution via the Poisson bracket:
\begin{equation}
	\frac{df}{dt} = \{ f,H\}.
\end{equation}
Now, the coordinate time $t$ is not the proper time $\tau$ for observers comoving with the cavity; i.e. $x=$ constant observers.  From the metric (\ref{eq:metric}), $\tau$ is related to $t$ by
\begin{equation}
	dt  = \gamma \, d\tau, \quad \frac{df}{d\tau} = \left\{ f,\gamma H \right\}.
\end{equation}
Hence the proper time Hamiltonian (i.e., the generator of $\tau$-evolution), is
\begin{equation}\label{eq:proper time Hamiltonian}
	H_{\tau} = \gamma H.
\end{equation}

\section{Fourier mode functions}\label{sec:Fourier}

Following the standard procedure, in order to quantize the system presented in the last section we need to specify a complete basis of solutions for the wave equation (\ref{eq:wave equation}).  The natural temptation is to choose mode functions that resemble those used for the standard wave equation in the presence of a reflecting boundary;  that is, mode functions satisfying Fourier-like initial conditions 
\begin{equation}\label{eq:initial data}
	\psi_{n} |_{\Sigma_{-}} = \frac{1}{\sqrt{\pi\zeta_{n}}} \sin (k_{n} x), \quad \di_{t}\psi_{n} |_{\Sigma_{-}} = -i \zeta_{n} \psi_{n} |_{\Sigma_{-}}, \quad 
\end{equation}
where $n = 1,2,3 \ldots$, $k_{n} = {\pi n}/{L}$, and $\zeta_{n} \ne 0$ are constants yet to be determined.  These explicitly satisfy the boundary conditions on the initial hypersurface
\begin{equation}
	\di_{x}^{2i}\psi_{n}|_{\Sigma_{-}\cap\Sigma_{L}} = \di_{x}^{2i}\dot\psi_{n}|_{\Sigma_{-}\cap\Sigma_{R}} = 0, \quad i = 0,1,2 \ldots
\end{equation}
Now, in order for $\{\psi_{n},\psi_{n}^{*}\}$ to comprise a complete basis, we cannot choose the $\zeta_{n}$ in a completely arbitrary fashion.  To see why, let $\phi$ be an arbitrary real solution of the wave equation (\ref{eq:wave equation}), and further suppose it can be decomposed as follows:
\begin{equation}\label{eq:Fourier decomposition 1}
	\phi = \frac{\pi}{L} \sum_{n=1}^{\infty}(a_{n}\psi_{n}+a_{n}^{*}\psi_{n}^{*}).
\end{equation}
We can use this to evaluate $\phi$ and $\dot\phi$ on $\Sigma_{-}$:
\begin{subequations}
\begin{align}
	\phi|_{\Sigma_{-}} & = \frac{2\pi}{L}\sum_{n} \text{Re} (A_{n}) \sin \left( \frac{n\pi x}{L} \right) , \\ \dot\phi|_{\Sigma_{-}} & = - \frac{2\pi}{L}\sum_{n} \text{Im} (\zeta_{n} A_{n}) \sin \left( \frac{n\pi x}{L} \right). 
\end{align}
\end{subequations}
with $ A_{n} = {a_{n}}/{\sqrt{\pi\zeta_{n}}}$.  If $\text{Re}(\zeta_{n}) \ne 0$, then these expressions represent independent Fourier sine series for $\phi|_{\Sigma_{-}}$ and $\dot\phi|_{\Sigma_{-}}$, respectively.  This implies that $\{\psi_{n},\psi_{n}^{*}\}$ form an $L_{2}$-complete basis for solutions of the wave equation (\ref{eq:wave equation}).  However, if $\text{Re}(\zeta_{n}) = 0$ for any $n$, then $\phi|_{\Sigma_{-}}$ and $\dot\phi|_{\Sigma_{-}}$ cannot be independently selected, and $\{\psi_{n},\psi_{n}^{*}\}$ will fail to be a basis.  In what follows, we assume that all the $\zeta_{n}$ are real and positive.

Using the initial data (\ref{eq:initial data}), it is easy to see that $\{\psi_{n},\psi_{n}^{*}\}$ is \emph{not} an orthogonal basis under the inner product (\ref{eq:inner product}) if $v \ne 0$:
\begin{subequations}\label{eq:explicit inner 1}
\begin{align}
	(\psi_{n},\psi_{m}) & = - (\psi^{*}_{n},\psi^{*}_{m})^{*} = \frac{L}{\pi} (\delta_{nm}+\sigma_{nm}), \\ (\psi_{n},\psi_{m}^{*}) & = (\psi^{*}_{n},\psi_{m}) = \frac{L}{\pi}\sigma_{nm},
\end{align}
\end{subequations}
where
\begin{equation}
\sigma_{nm} = \sigma^{*}_{mn} = \begin{cases} 0, & n = m, \\ \displaystyle \frac{2ivnm[1-(-1)^{n+m}]}{L\sqrt{\zeta_{n}\zeta_{m}}(m^{2}-n^{2})}, & n \ne m,  \end{cases}
\end{equation}
However, it turns out that these mode functions and their time derivatives do satisfy orthogonality relationships of the form
\begin{equation}\label{eq:dot orthogonality}
\begin{array}{r} (\psi_{n},\dot\psi_{m})  \\ (\psi_{n},\dot\psi_{m}^{*})  \end{array} \bigg\} = \frac{iL[\pm \zeta_{n}^{2}  + F^{2}(k_{n})-v^{2}k_{n}^{2} ] }{2\pi \zeta_{n}} \delta_{nm},
\end{equation}
where
\begin{equation}
	F(k) = -i\hat{F}(ik).
\end{equation}
Equation (\ref{eq:dot orthogonality}) does suggest that the Hamiltonian (\ref{eq:classical Hamiltonian}) will have a particularly simple form in terms of these mode functions.  However, it would be ideal to work with a basis for which both of $(\psi_{n},\psi_{m})$ and $(\psi_{n},\dot{\psi}_{m})$ are proportional to $\delta_{nm}$.   Unfortunately, there does not seem to be a closed form expression for initial data for such mode functions, so we will work with the non-orthogonal basis defined by (\ref{eq:initial data}) for the rest of this paper.

We have yet to fix the functional relationship between $\zeta_{n}$ and $k_{n}$.  Equation (\ref{eq:dot orthogonality}) suggests that a convenient choice is
\begin{equation}\label{eq:dispersion}
\zeta_{n} = \sqrt{|F^{2}(k_{n})-v^{2}k_{n}^{2}|},
\end{equation}
which simplifies the inner products (\ref{eq:dot orthogonality}) considerably.  We can assign a physical interpretation to this dispersion relation choice by considering solutions to the wave equation (\ref{eq:wave equation}) with $e^{ikx}$ spatial dependance in the case of an infinitely large cavity; i.e., in the $L \rightarrow \infty$ limit:
\begin{equation}\label{eq:plane wave sol}
	\phi = \mathcal{C}_{+} e^{ -i [F(k)t+vkt-kx]} + \mathcal{C}_{-} e^{ i [F(k)t-vkt+kx]},
\end{equation}
where $\mathcal{C}_{\pm}$ are constants.  This is superposition of plane wave modes with phase velocities
\begin{equation}
	u_{\pm}(k) = v \pm \frac{F(k)}{k}.
\end{equation} 
The geometric mean of the associated phase speeds is:
\begin{equation}
	u(k) = \sqrt{|u_{+}(k)u_{-}(k)|} = \frac{\sqrt{|F^{2}(k)-v^{2}k^{2}|}}{k}.
\end{equation}
Comparison with (\ref{eq:dispersion}) yields
\begin{equation}\label{eq:phase vel}
	\zeta_{n} = u_{n} k_{n}, \quad u_{n} = u(k_{n}).
\end{equation}
That is, the constant of proportionality between $\zeta_{n}$ and $k_{n}$ is the mean phase speed of plane waves modes with wavenumber $k_{n}$ when $L \rightarrow \infty$.

It is important to note that if
\begin{equation}\label{eq:supersonic condition}
	v^{2} > \frac{F^{2}(k)}{k^{2}}
\end{equation}
for a given choice of $k$, the phase velocities $u_{\pm}$ will have the same sign.  Now, in the $(T,X)$ coordinates the plane wave solution (\ref{eq:plane wave sol}) is
\begin{equation}
	\phi = \mathcal{C}_{+} e^{ -i [F(k)T-kX]} + \mathcal{C}_{-} e^{ i [F(k)T+kX]}.
\end{equation}
This is clearly a superposition of a left and right moving modes with identical phase speeds $|F(k)|/k$.  We therefore see that the condition (\ref{eq:supersonic condition}) implies that the $(t,x)$ frame is moving faster than the phase speed of modes with wavenumber $k$ as measured in the preferred frame.  That is, the $(t,x)$ frame is moving with a supersonic velocity with respect to such modes.  Now, when $L$ is finite, the cavity's walls will be measured to have speed $v$ in the preferred frame, which causes us to adopt the following terminology: if
\begin{equation}\label{eq:fast vs slow}
	F^{2}(k_{n})-v^{2}k_{n}^{2} > 0, \quad n = 1,2 \ldots,
\end{equation} 
we say that the cavity velocity is \emph{subsonic}; otherwise, we call the cavity velocity \emph{supersonic}.  Furthermore, we call modes for which the inequality (\ref{eq:fast vs slow}) holds \emph{fast} modes, all other modes are called \emph{slow}.  It follows that if $F^{2}(k) \ge k^{2}$ for all $k>0$, then there will exist no slow modes for all $|v| < 1$.  Conversely, if $F^{2}(k) < k^{2}$ for some $k>0$, then there will exist slow modes for certain cavity velocities.

To conclude this section, we note that when both $v$ and $L$ are finite, we do not know closed form expressions for the mode functions defined by (\ref{eq:initial data}) throughout $\Omega$.  This is not a fundamental difficulty since---as we will see explicitly below---all we use to find the spectrum of the Hamiltonian is explicit knowledge of the conserved inner products (\ref{eq:explicit inner 1}) and (\ref{eq:dot orthogonality}).

\section{Quantization using nonorthogonal modes}\label{sec:quantization}

We now quantize the system by promoting $\phi$ and $\pi$ to operators satisfying commutation relations
\begin{gather}
	 \nonumber [\hat{\phi}(t,x),\hat{\phi}(t,x')] =   [\hat{\pi}(t,x),\hat{\pi}(t,x')]  = 0, \\
	 [\hat{\phi}(t,x),\hat{\pi}(t,x')] = i \delta(x-x'), \label{eq:equal time}
\end{gather}
We consider the following mode decomposition for $\hat\phi$:
\begin{equation}\label{eq:mode decomposition}
	\hat\phi(t,x) = \frac{\pi}{L} \sum_{n=1}^{\infty} [\hat{a}_{n} \psi_{n}(t,x) + \hat{a}^{\dag}_{n} \psi^{*}_{n}(t,x) ],
\end{equation}
where $\{\psi_{n},\psi_{n}^{\star}\}$ are the Fourier mode functions introduced in the last section.\footnote{Most of the formulae in this section and the next are valid for an arbitrary choice of basis.}  This means that we cannot make the usual assumption that the modes satisfy orthogonality relations of the form
\begin{equation}\label{eq:algebra 0}
	 (\psi_{n},\psi_{m})= \frac{L}{\pi} \delta_{nm} = - (\psi_{n}^{*},\psi_{m}^{*}), \quad (\psi_{n},\psi_{m}^{*}) = 0.
\end{equation}

For various technical reasons encountered below, we will find it convenient to introduce an ultraviolet trucation in our mode expansion.  Specifically, we retain the first $N$ Fourier modes in the expansion (\ref{eq:mode decomposition}):
\begin{equation}\label{eq:mode decomposition}
	\hat\phi(t,x) = \frac{\pi}{L} \sum_{n=1}^{N} [\hat{a}_{n} \psi_{n}(t,x) + \hat{a}^{\dag}_{n} \psi^{*}_{n}(t,x) ],
\end{equation}
Physically, this corresponds to considering solutions of the wave equation generated by initial data on $\Sigma_{-}$ whose Fourier transform contains wave numbers $\le k_{N} = N\pi/L$.  Eventually, we will take the limit $N \rightarrow \infty$.

We also find it convenient to introduce some vector/matrix notation for the operators and mode functions appearing in (\ref{eq:mode decomposition}).  We write:
\begin{equation}\label{eq:a vector}
	\hat{\mathbf{a}} = \begin{pmatrix*}[c]  \hat{a}_{1} \\ \vdots \\ \hat{a}_{N} \\ \hline  \hat{a}^{\dag}_{1} \\ \vdots \\ \hat{a}^{\dag}_{N}  \end{pmatrix*}, \quad \bm{\psi} = \begin{pmatrix*}[c]  \psi_{1} \\ \vdots \\ \psi_{N} \\ \hline    \psi^{*}_{1} \\ \vdots \\  \psi^{*}_{N} \end{pmatrix*}.
\end{equation}
We use an ``H'' to denote the Hermitian transpose of a matrix, such that
\begin{align}
	\nonumber \bm{\psi}^{\H} = (\bm{\psi}^{*})^{\T}& = \begin{pmatrix*}[c]  \psi_{1}^{*} \cdots \psi_{N}^{*} \mid  \psi_{1} \cdots \psi_{N} \end{pmatrix*}, \\ \hat{\mathbf{a}}^{\H} = (\hat{\mathbf{a}}^{\dag})^{\T} & = \begin{pmatrix*}[c]  \hat{a}_{1}^{\dag} \cdots \hat{a}_{N}^{\dag} \mid \hat{a}_{1} \cdots \hat{a}_{N} \end{pmatrix*}. 
\end{align}
With this notation, we can write our mode decomposition concisely as
\begin{equation}
	\hat\phi = \frac{\pi}{L} \hat{\mathbf{a}}^{\H} \bm{\Gamma} \bm{\psi} = \frac{\pi}{L}  \bm{\psi}^{\H} \bm{\Gamma} \hat{\mathbf{a}},
\end{equation}
where
\begin{equation}
	\bm{\Gamma} =  \begin{pmatrix*}[r]  \mathbf{0} & \mathbf{I} \\  \mathbf{I} & \mathbf{0}  \end{pmatrix*}, \quad \bm{\Gamma}^{2} = \mathbf{I},
\end{equation}
and $\mathbf{I}$ is the identity matrix.  If we now define
\begin{equation}
	(\hat\phi,\bm{\psi}^{\H}) = \begin{pmatrix*}[c] (\hat\phi,\psi_{1}^{*}) \cdots (\hat\phi,\psi_{N}^{*}) \mid (\hat\phi,\psi_{1})  \cdots (\hat\phi,\psi_{N}) \end{pmatrix*},
\end{equation}
then we have
\begin{equation}
	(\hat\phi,\bm{\psi}^{\H}) = \hat{\mathbf{a}}^{\H} \bm{\Gamma} \mathbf{S},
\end{equation}
where the overlap matrix $\mathbf{S}$ is defined as
\begin{equation}\label{eq:S def}
\textbf{S} = \frac{\pi}{L} (\bm\psi,\bm\psi^{\H}) =  \frac{\pi}{L}  \begin{pmatrix*}[c] [(\psi_{n},\psi_{m}^{*})]  &  [(\psi_{n},\psi_{m})]  \\   [(\psi_{n}^{*},\psi_{m}^{*})]  & [(\psi_{n}^{*},\psi_{m})]  \end{pmatrix*}.
\end{equation}
For an orthogonal basis satisfying (\ref{eq:algebra 0}), the overlap matrix reduces to
\begin{equation}\label{eq:orthogonal}
	\mathbf{S} = \bm{\Sigma} \bm{\Gamma}, \quad \bm{\Sigma} =   \begin{pmatrix*}[r]  \mathbf{I} & \mathbf{0} \\ \mathbf{0} & -\mathbf{I}  \end{pmatrix*}, \quad \bm{\Sigma}^{2} = \mathbf{I}.
\end{equation}

Now, assuming $\mathbf{S}$ is invertible\footnote{We comment that part of the reason for introducing a finite cutoff in our mode expansion is that the issue of the invertibility is considerably more complicated in the infinite dimensional case.  If we take the matrix dimension $2N$ to be infinite \emph{a priori}, it is possible that only one of the left- or right-inverses of $\mathbf{S}$ might exist.  In this case, the statement that $\mathbf{S}$ is invertible means that both the left- and right-inverses exist and are equal.} and noting that $(\hat\phi,\bm{\psi}^{\H})^{\H} = -(\hat\phi,\bm{\psi})$, we have
\begin{align}\label{eq:a solutions}
	\hat{\mathbf{a}} = - \bm{\Gamma} (\mathbf{S}^{-1})^{\H} (\hat\phi,\bm{\psi}), \quad \hat{\mathbf{a}}^{\H} = (\hat\phi,\bm{\psi}^{\H})  \mathbf{S}^{-1} \bm{\Gamma}.
\end{align}
From (\ref{eq:equal time}) it follows that if $\varphi$ and $\psi$ are classical scalar quantities, then
\begin{equation}
	[ (\hat\phi, \varphi), (\hat\phi, \psi^{*}) ] = (\varphi^{*},\psi^{*})=-(\psi,\varphi).
\end{equation}
Using this and (\ref{eq:a solutions}), we find the following commutation relations for $\{\hat{a}_{n},\hat{a}_{n}^{\dag}\}$:
\begin{equation}\label{eq:algebra 1}
	[\hat{\mathbf{a}},\hat{\mathbf{a}}^{\H}] = \frac{L}{\pi} \mathbf{R}^{-1},
\end{equation}
where the (assumed invertible) matrix $\mathbf{R}$ is defined by
\begin{equation}\label{eq:R def}	
	\textbf{R} = -\mathbf{S}^{\H}\bm{\Gamma} = \frac{\pi}{L}  \begin{pmatrix*}[c] [(\psi_{n},\psi_{m})]  &  [(\psi_{n}^{*},\psi_{m})]  \\   [(\psi_{n},\psi_{m}^{*})]  & [(\psi_{n}^{*},\psi_{m}^{*})]\end{pmatrix*}^{*}.
\end{equation}
Now, if the basis $\{\psi_{n},\psi_{n}^*\}$ satisfies (\ref{eq:algebra 0}) then $\mathbf{R} = \mathbf{\Sigma}$.  This, in turn, implies
\begin{equation}\label{eq:algebra 2}
	\mathbf{R} = \bm{\Sigma} \quad \Leftrightarrow \quad [\hat{a}_{n},\hat{a}_{m}^{\dag}]  = \frac{L}{\pi} \delta_{nm}, \quad [\hat{a}_{n},\hat{a}_{m}] = 0.
\end{equation}
Therefore, $\{\hat{a}_{n},\hat{a}_{n}^{\dag}\}$ will only satisfy the algebra of creation and annihilation operators if $\{\psi_{n},\psi_{n}^*\}$ satisfies (\ref{eq:algebra 0}).

From the formula (\ref{eq:classical Hamiltonian}), we obtain the following expression for the Hamiltonian operator
\begin{equation}\label{eq:Hamiltonian 0}
	\hat{H} = \frac{\pi^{2}}{2L^{2}} \hat{\textbf{a}}^{\H} \bm{\Omega} \hat{\textbf{a}},
\end{equation}
with
\begin{equation}\label{eq:Omega def}
\bm{\Omega} = - i\bm{\Gamma} (\bm{\psi},\di_{t}\bm{\psi}^{\H})= -i \begin{pmatrix*}[c]  [(\psi_{n}^{*},\dot\psi_{m}^{*})] & [ (\psi_{n}^{*},\dot\psi_{m})] \\  [(\psi_{n},\dot\psi_{m}^{*})] & [(\psi_{n},\dot\psi_{m})] \end{pmatrix*}.
\end{equation}
Note that $\bm{\Omega}$ is Hermitian (i.e., $\bm{\Omega}^{\H} = \bm{\Omega}$), which means that $\hat{H}$ is self-adjoint, as required.

We conclude this section by giving formulae for $\mathbf{R}$, $\mathbf{R}^{-1}$ and $\bm{\Omega}$ for the Fourier-mode functions introduced in Section \ref{sec:Fourier}:
\begin{subequations}\label{eq:R and Omega formulae}
\begin{gather}
	\nonumber \textbf{R} =  \bm{\Sigma} - \begin{pmatrix*}[c]  \bm\sigma & \bm\sigma \\  \bm\sigma & \bm\sigma  \end{pmatrix*},  \quad \mathbf{R}^{-1} =  \bm{\Sigma} - \begin{pmatrix*}[r] - \bm\sigma & \bm\sigma \\  \bm\sigma & -\bm\sigma  \end{pmatrix*}, \\ \bm{\Omega} =   \begin{pmatrix*}[c]  \bm\rho & \bm\xi \\  \bm\xi & \bm\rho  \end{pmatrix*}.
\end{gather}
\end{subequations}
The entries of the various sub-matricies appearing above are:
\begin{gather}
	\nonumber \sigma_{nm}  = \sigma^{*}_{mn} = \begin{cases} 0, & n = m, \\ \displaystyle \frac{2iv\sqrt{nm}[1-(-1)^{n+m}]}{\pi\sqrt{u_{n} u_{m}}(m^{2}-n^{2})}, & n \ne m,  \end{cases}  \\ \rho_{nm} =  n u_{n} \varepsilon_{n} \delta_{nm} , \quad \xi_{nm}  =  - n u_{n} (1-\varepsilon_{n}) \delta_{nm}. \label{eq:rho and xi components}
\end{gather}
where $u_{n}$ is the mean phase velocity defined in equation (\ref{eq:phase vel}), and
\begin{equation}
	\varepsilon_{n} = \begin{cases} 1, & F^{2}(k_{n})-v^{2}k_{n}^{2}>0,  \\ 0, & F^{2}(k_{n})-v^{2}k_{n}^{2}<0. \end{cases}
\end{equation}
Note that for a cavity with subsonic velocity (i.e.\ all modes are ``fast''), $\varepsilon_{n} = 1$ for all $n$ and $\bm{\xi} = 0$.

\section{Bogoliubov transformation}\label{sec:Bogliobov}

The commutator algebra (\ref{eq:algebra 1}) and Hamiltonian (\ref{eq:Hamiltonian 0}) along with the matrix definitions (\ref{eq:R def}) and (\ref{eq:Omega def}) completely fix the quantum dynamics of the system.  However, if $\mathbf{R} \ne \bm{\Sigma}$ then the $\{\hat{a}_{n},\hat{a}_{n}^{\dag}\}$ operators cannot be used to conduct a Fock basis; i.e., we cannot write the Hamiltonian as a function of a number operator that corresponds to the occupation number of a given mode.  In this section, we therefore attempt to find a Bogoliubov transformation from $\{\hat{a}_{n},\hat{a}_{n}^{\dag}\}$ to a new set of operators $\{\hat{b}_{n},\hat{b}_{n}^{\dag}\}$ that do satisfy the creation/annihilation algebra.  We also attempt to impose the constraint that the Hamiltonian is diagonal in this new operator basis.  Our treatment loosely follows the formalism of \citet{2009arXiv0908.0787X}, which treats the problem of diagonalizing a Hamiltonian expressed in an orthogonal basis.

Our Bogoliubov transformation is explicitly given by
\begin{equation}
	\hat{a}_{n} = \sum_{m=1}^{N} ( A_{nm} \hat{b}_{m} + B_{nm} \hat{b}_{m}^{\dag}), \quad \hat{\mathbf{a}} = \mathbf{T}  \hat{\mathbf{b}},
\end{equation}
with
\begin{equation}\label{eq:T structure 1}	
	\mathbf{T} =  \begin{pmatrix*}[l]  \mathbf{A} & \mathbf{B} \\  \mathbf{B}^{*} & \mathbf{A}^{*}  \end{pmatrix*}.
\end{equation}
Here $A_{nm}$ and $B_{nm}$ are the Bogoliubov coefficients.  We note that equation (\ref{eq:T structure 1}) is equivalent to demanding
\begin{equation}\label{eq:T structure 2}
	\mathbf{T} = \bm{\Gamma} \mathbf{T}^{*} \bm{\Gamma}.
\end{equation}
A matrix satisfying either (\ref{eq:T structure 1}) or (\ref{eq:T structure 2}) is said to be a ``Bogoliubov-Valatin'' (BV) transformation.  Now, (\ref{eq:algebra 1}) gives
\begin{equation}
	 \mathbf{T} [\hat{\mathbf{b}},\hat{\mathbf{b}}^{\H}] \mathbf{T}^{\H}= \frac{L}{\pi} \mathbf{R}^{-1} .
\end{equation}
We demand that 
\begin{equation}\label{eq:T condition 1}
	\mathbf{T}^{\H} \mathbf{R} \mathbf{T} = \bm{\Sigma},
\end{equation}
which implies that  $\{\hat{b}_{n},\hat{b}_{n}^{\dag}\}$ satisfies the creation/annihilation algebra:
\begin{equation}\label{eq:b algebra}
	[\hat{\mathbf{b}},\hat{\mathbf{b}}^{\H}] = \frac{L}{\pi} \bm{\Sigma}.
\end{equation}
Now, in terms of the $\hat{\mathbf{b}}$ operators, the Hamiltonian is
\begin{equation}\label{eq:Hamiltonian 1}
	\hat{H} = \frac{\pi^{2}}{2L^{2}} \hat{\textbf{b}}^{\H} \mathbf{T}^{\H} \bm{\Omega} \mathbf{T} \hat{\textbf{b}}.
\end{equation}
Equation (\ref{eq:T condition 1}) implies that the inverse of $\mathbf{T}$ exists and is given by $ \mathbf{T}^{-1} = \bm{\Sigma} \mathbf{T}^{\H} \mathbf{R}$, which means this can be re-written as
\begin{equation}\label{eq:Hamiltonian 2}
	\hat{H} = \frac{\pi^{2}}{2L^{2}} \hat{\textbf{b}}^{\H} \bm{\Sigma} \mathbf{T}^{-1} \mathbf{D} \mathbf{T} \hat{\textbf{b}},
\end{equation}
where
\begin{equation}
	\mathbf{D} = \mathbf{R}^{-1} \mathbf{\Omega}.
\end{equation}
Let us now make the further assumption that we can choose $\mathbf{T}$ such that it diagonalizes the $\mathbf{D}$ matrix in the following manner:
\begin{equation}
	 \mathbf{T}^{-1} \mathbf{D} \mathbf{T}  = \text{diag}(\mu_{1},\ldots,\mu_{N},-\mu_{1},\ldots,-\mu_{N}),
\end{equation}
where $\mu_{1} \ldots \mu_{N}$ are positive numbers.  Then, the Hamiltonian and operator algebra take the familiar form
\begin{equation}\label{eq:diagonal H}
	\hat{H} = \frac{\pi^{2}}{2L^{2}} \sum_{n=1}^{N} \mu_{n} ( \hat{b}_{n}^{\dag} \hat{b}_{n} + \hat{b}_{n} \hat{b}_{n}^{\dag}  ), \quad  [\hat{b}_{n},\hat{b}_{m}^{\dag}]  = \frac{L}{\pi} \delta_{nm},
\end{equation}
with all other $\hat{b}_{n}$ commutators vanishing.  A number operator that returns the number of field quanta associated with the $n^\text{th}$ mode for a given quantum state is defined in the usual way: $\hat{\mathcal{N}}_{n} = (\pi/L) \hat{b}_{n}^{\dag} \hat{b}_{n}$.  Making note of (\ref{eq:Lorentz contraction}) and (\ref{eq:proper time Hamiltonian}), we find that the proper time Hamiltonian is given by
\begin{equation}\label{eq:normal mode H}
	\hat{H}_{\tau} = \sum_{n=1}^{N} \res_{n} ( \hat{\mathcal{N}}_{n} + \tfrac{1}{2} ), \quad \res_{n} = \frac{\pi\mu_{n}}{L_{0}(1-v^{2})}.
\end{equation}
We call the $\res_{n}$ the ``normal mode'' frequencies of the cavity due to the similarity of (\ref{eq:normal mode H}) and the Hamiltonian of a mechanical mass-spring system expressed in normal coordinates.  The energy levels (i.e.\ eigenvalues of $\hat{H}_{\tau}$) of the system will be the sum of the ground state energy plus integer multiples of the normal mode frequencies (as measured by an observer comoving with the cavity).

To summarize, if there exists a $2N$-dimensional square matrix $\mathbf{T}$ such that
\begin{enumerate}[label={[\roman*]}]
	\item \label{eq:condition 2} $\mathbf{T}^{\H} \textbf{R} \mathbf{T} = \bm{\Sigma}$,
	\item \label{eq:condition 0} $\mathbf{T} = \bm{\Gamma} \mathbf{T}^{*} \bm{\Gamma}$, and
	\item \label{eq:condition 1} $\mathbf{T}^{-1}\mathbf{D}\mathbf{T}  =  \text{diag}(\mu_{1},\ldots,\mu_{N},-\mu_{1},\ldots,-\mu_{N}),$
\end{enumerate}
then we can write the Hamiltonian in the diagonal form (\ref{eq:diagonal H}) via a Bogliobov transformation.  In such cases, we call $\mathbf{T}$ a ``normal mode transformation''.  Furthermore, if the above conditions hold for all $N$, then in the limit $N\rightarrow \infty$ we will have
\begin{equation}
	\hat{H}_{\tau} = \sum_{n=1}^{\infty} \res_{n} ( \hat{\mathcal{N}}_{n} + \tfrac{1}{2} ), \quad \res_{n} = \frac{\pi\mu_{n}}{L_{0}(1-v^{2})},
\end{equation}
where $\{\pm\mu_{n}\}$ are the eigenvalues of the infinite matrix $\mathbf{D}_{\infty} = \lim_{N\rightarrow\infty} \mathbf{D}$.  In Appendix \ref{sec:existence}, we demonstrate that such a normal mode transformation exists if and only if $\bm\Omega$ is positive definite for all $N$.

We conclude this section by noting that, in addition to the $\hat{\mathbf{b}}$ operators, $\mathbf{T}$ can be used to define normal mode functions $\{\varphi_{n},\varphi^{*}_{n}\}$.  We write
\begin{equation}
	\bm{\varphi} = \begin{pmatrix*}[c]  \varphi_{1} \\ \vdots \\ \hline    \varphi^{*}_{1} \\ \vdots  \end{pmatrix*} = \bm{\Gamma} \mathbf{T}^{\H} \bm{\Gamma} \bm{\psi},
\end{equation}
which implies that
\begin{equation}
	\hat\phi = \frac{\pi}{L} \hat{\mathbf{b}}^{\H} \bm{\Gamma} \bm{\varphi}.
\end{equation}
Repeating the calculations of section \ref{sec:quantization} with this mode expansion and noting both (\ref{eq:b algebra}) and (\ref{eq:diagonal H}), we see that the normal mode functions satisfy:
\begin{gather}\nonumber
	 (\varphi_{n},\varphi_{m})= (L/\pi) \delta_{nm} = - (\varphi_{n}^{*},\varphi_{m}^{*}), \quad (\varphi_{n},\varphi_{m}^{*}) = 0, \\ (\varphi_{n},\dot\varphi_{m}) = i\mu_{n}\delta_{nm}, \quad  (\varphi_{n}^{*},\dot\varphi_{m}) = 0.
\end{gather}
From this, it follows that:
\begin{equation}
	\left( \varphi_{n}, \frac{d\varphi_{m}}{d\tau} + i\omega_{m}\varphi_{m} \right) = \left( \varphi_{n}^{*}, \frac{d\varphi_{m}}{d\tau} + i\omega_{m}\varphi_{m} \right) = 0,
\end{equation}
for all $n$ and $m$.  Since $\{\varphi_{n},\varphi_{n}^{*}\}$ form a complete basis, this means
\begin{equation}
	\frac{d\varphi_{m}}{d\tau} + i\omega_{m}\varphi_{m} = 0 \implies \varphi_{m} = e^{-i\omega_{m}\tau} \Phi_{m}(x),
\end{equation}
where $\Phi_{m}(x)$ is a function determined by the details of the normal mode transformation or, equivalently, a solution of the ODE boundary value problem
\begin{gather}\nonumber
	 \left(-\frac{i \omega_{m}}{\gamma}+v \di_{x}\right)^{2}\Phi_{m}-\hat{F}^{2}(\di_{x})\Phi_{m} = 0, \\ \di^{(2i)}_{x} \Phi_{m} |_{x=0,L} = 0.\label{eq:ODE BVP}
\end{gather}
In other words, the $\varphi_{m}$ functions oscillate sinusoidally with frequency $\omega_{m}$ according to an observer comoving with the cavity.  This is consistent with the classical behaviour of a normal mode, and hence provides additional justification for calling $\omega_{m}$ the ``normal mode'' frequencies of the system.

\section{Resonant frequencies for subsonic cavity velocity}\label{sec:subsonic}

In the last section, we saw how the spectrum of $\mathbf{D}$ in the $N \rightarrow \infty$ limit gives the energy eigenvalues of the system, also known as the normal mode frequencies, provided that a normal mode transformation exists.  In this section, we consider the situation when the cavity velocity is subsonic; or, equivalently, the case when all Fourier modes are considered to be ``fast''.  In this case, we have $\bm\xi = \bm{0}$, which means that $\bm\Omega$ is positive-definite.  By Lemma \ref{lem:Omega positive} in Appendix \ref{sec:existence}, this means that a normal mode transformation does indeed exist.  In this section, we calculate the normal mode frequencies using several different methods and assumptions.

\subsection{Zero velocity case}\label{sec:zero velocity}

When the cavity has zero velocity with respect to the preferred frame, the matrix $\mathbf{D}$ is diagonal and its positive eigenvalues are simply:
\begin{equation}
	\mu_{n} = \frac{\pi}{L_{0}} \left| F\left( \frac{n\pi}{L_{0}} \right) \right|.
\end{equation}	
This leads to normal mode frequencies of the form
\begin{equation}
	\omega_{n} = \left| F\left( \frac{n\pi}{L_{0}} \right) \right| = \frac{n\pi}{L_{0}} \left| 1 - \sum_{i=2}^{M} c_{i} (-1)^{i} \left( \frac{n\pi L_{\star}}{L_{0}} \right)^{2i-2} \right|.
\end{equation}
If we make the physical assumption that the cavity is much bigger than the exotic physics length scale $L_{0} \gg n L_{\star}$, then we have
\begin{equation}
	\omega_{n} \approx \omega_{n}^{(0)} + \delta\omega_{n}^\text{static} , 
\end{equation}
where
\begin{equation}
	\omega_{n}^{(0)} = \frac{n\pi}{L_{0}}, \quad \delta\omega_{n}^\text{static} = -c_{2}  [\omega_{n}^{(0)} L_{\star} ]^{2} \omega_{n}^{(0)}.
\end{equation}
Here, $\omega_{n}^{(0)}$ are the frequencies of the cavity in the Lorentz invariant case, and $\delta\omega_{n}^\text{static}$ is the leading order correction to the frequencies when the cavity is at rest compared to the preferred frame.
	
\subsection{High velocity limit}\label{sec:high v}

We first examine the limit in which $|v| \rightarrow 1$ while the proper cavity length $L_{0}$ is held fixed.  In this limit, the wavenumbers $k_{n}$ are divergent:
\begin{equation}
	k_{n} = \frac{n\pi}{L} = \frac{n\pi}{L_{0}\sqrt{1-v^{2}}} \rightarrow \infty.
\end{equation}
Recall that the highest order spatial derivative in the wave equation (\ref{eq:wave equation}) is $\di_{x}^{4M-2}$.  If $M > 1$, it follows that the mean phase velocity is given by
\begin{equation}\label{eq:phase velocity high v}
	u_{n} \approx  \frac{|c_{M}|}{2^{M-1}} \left( \frac{n\pi L_{\star}}{L_{0}} \right)^{2M-2} (1-|v|)^{1-M},
\end{equation}
for $|v| \rightarrow 1$.  Now, we can decompose the $\mathbf{D}$ matrix as
\begin{gather}\nonumber
	\mathbf{D} = \mathbf{D}_{0} + \mathbf{D}_{1}, \\ \mathbf{D}_{0} = \begin{pmatrix*}[r]  \bm\rho & \mathbf{0}  \\  \mathbf{0}   & -\bm\rho  \end{pmatrix*}  , \quad \mathbf{D}_{1} = \begin{pmatrix*}[r] \bm\sigma \bm\rho &  - \bm\sigma \bm\rho  \\  - \bm\sigma\bm\rho  & \bm\sigma\bm\rho  \end{pmatrix*}.
\end{gather}
Putting (\ref{eq:phase velocity high v}) into (\ref{eq:rho and xi components}), we see that $\|\bm\rho\| = \mathcal{O}[(1-|v|)^{1-M}]$ while $\|\bm\sigma\| = \mathcal{O}[(1-|v|)^{M-1}]$, where $\| \cdot \|$ indicates some suitable matrix norm.  Hence, we can take $\mathbf{D} \approx \mathbf{D}_{0}$.  Since $\mathbf{D}_{0}$ is diagonal its eigenvalues and eigenvectors are trivial and can be divided into two sets.  First, we have eigenvalue-eigenvector pairs:
\begin{equation}\label{eq:D0 eigenvalue 1}
	(\mu_{n},\mathbf{v}_{n}) = (nu_{n},\mathbf{e}_{n}), \quad n = 1 \ldots N,
\end{equation}
where $\mathbf{e}_{n}$ is the $n^\text{th}$ standard basis vector in $\mathbb{C}^{2N}$.  The second set of eigenvalue-eigenvector pairs are
\begin{equation}\label{eq:D0 eigenvalue 2}
	(-\mu_{n},\mathbf{u}_{n}) = (-nu_{n},\bm\Gamma\mathbf{e}_{n}), \quad n = 1 \ldots N.
\end{equation}
Neglecting the contribution of $\mathbf{D}_{1}$ to $\mathbf{D}$, we then find that the normal mode frequencies are given by
\begin{equation}\label{eq:high v omega}
	\omega_{n} \approx \frac{n\pi |c_{M}|}{2^{M}L_{0}} \left( \frac{n\pi L_{\star}}{L_{0}} \right)^{2M-2} (1-|v|)^{-M}, \quad M > 1.
\end{equation}
This result is independent of matrix size, so we can trivially take the $N \rightarrow \infty$ limit.  Therefore, the energy levels of the system \emph{diverge} like $(1-|v|)^{-M}$ for high velocities.

If all the coefficients appearing in (\ref{eq:power F}) are of order unity, we expect the phase velocity approximation (\ref{eq:phase velocity high v}) to hold for
\begin{equation}
	k_{n} L_{\star} = \frac{n\pi L_{\star}}{L_{0}\sqrt{1-v^{2}}} \gg 1.
\end{equation} 
Hence, this is also the condition for which we expect the high-$v$ normal mode formula (\ref{eq:high v omega}) to be applicable.  We can rewrite this in terms of the Lorentz invariant frequency $\omega_n^{(0)} = n\pi/L_{0}$:
\begin{equation}
	\frac{\omega_{n}^{(0)} L_{\star}}{\sqrt{1-v^{2}}} \gg 1.
\end{equation} 
We can further interpret this physically by noting that in this approximation, the $\mathbf{D}$ matrix is diagonal and therefore the Fourier modes of section \ref{sec:Fourier} are the normal modes of the cavity.  Then, the characteristic wavelength of each of these modes as measured in the preferred frame is
\begin{equation}\label{eq:wavelength}
	\lambda^\text{pref}_{n} = \frac{L_{0}\sqrt{1-v^{2}}}{2n},
\end{equation}
and the condition for the approximation (\ref{eq:high v omega}) to hold is simply that mode wavelength in the preferred frame is less than the exotic physics length scale:
\begin{equation}
	\lambda^\text{pref}_{n} \ll L_{\star}.
\end{equation}

\subsection{Small velocity limit}\label{sec:low v}

We now turn our attention to the small velocity case $|v| \ll 1$.  We note that the matrix $\mathbf{D}_{1}$ is proportional to $v$, so it makes sense to treat it as a perturbative correction to $\mathbf{D}_{0}$.  This allows us to find the approximate the eigenvalues of $\mathbf{D}$ using an algorithm highly analogous to time-independent perturbation theory in quantum mechanics.  However, there is one key difference: in quantum mechanics, one is trying to estimate the eigenvalues of a Hermitian linear operator (the Hamiltonian), whereas here the relevant operator $\mathbf{D}$ is not self-adjoint.  Fortunately, it is not difficult to generalize the quantum mechanical formula to non-Hermitian linear operators.

We write the eigenvalues of $\mathbf{D}$ as
\begin{equation}
	\nu_{n} = \nu_{n,0}+\nu_{n,1}+\nu_{n,2}+ \cdots,
\end{equation}
where $\nu_{n,0}$ indicates the eigenvalue of $\mathbf{D}_{0}$ associated with an eigenvector $\mathbf{w}_{n}$, $\nu_{n,1}$ is the first order perturbative correction due to $\mathbf{D}_{1}$, $\nu_{n,2}$ is the second order correction, etc.  The first order correction is given by
\begin{equation}
\nu_{n,1} = \mathbf{w}_{n}^{\H} \mathbf{D}_{1} \mathbf{w}_{n}.
\end{equation}
Now, just as in the high velocity case above, the eigenvectors and eigenvalues of $\mathbf{D}_{0}$ are given by (\ref{eq:D0 eigenvalue 1}) and (\ref{eq:D0 eigenvalue 2}).  Since the eigenvectors are essentially the standard basis vectors, we see that $\nu_{n,1}$ is just the $(n,n)$ component of $\mathbf{D}_{1}$.  But $\mathbf{D}_{1}$ has zeroes on its main diagonal, so $\nu_{n,1}=0$.

Turning to the second order correction, we have
\begin{equation}\label{eq:perturbative lambda}
\nu_{n,2} = \sum_{k \ne n} \frac{  (\mathbf{w}_{n}^{\H} \mathbf{D}_{1} \mathbf{w}_{k} )( \mathbf{w}_{k}^{\H} \mathbf{D}_{1} \mathbf{w}_{n})}{\nu_{n,0} -\nu_{k,0} }.
\end{equation}
After some algebra, this expression yields the following formula for the positive eigenvalues of $\mathbf{D}$:
\begin{multline}
 \mu_{n} = nu_{n} \left[ 1 +   \frac{16 v^{2}n^{2}}{\pi} \sum_{k \ne n} \frac{ k^{2}[1-(-1)^{n+k}]^{2} }{ (k^{2}-n^{2})^{2}(n^{2}u_{n}^{2} -k^{2}u_{k}^{2})} \right. \\ \left. + \mathcal{O}(v^{4}) \right]
\end{multline}
We note that this expression holds for all $N$ and that the mean phase velocity $u_{n}$ is a function of $v$.

For certain choices of dispersion relation, the sum appearing in (\ref{eq:perturbative lambda}) is analytically calculable in the $N \rightarrow \infty$ limit.  For example, if we take 
\begin{gather}\nonumber
	\hat{F}(\di_{x}) = \di_{x} - L_{\star}^{2} \di_{x}^{3}, \quad F(k) = k + L_{\star}^{2} k^{3}, \\ u_{n} = \sqrt{\left(1 + \frac{n^{2} \pi^{2} L_{\star}^{2}}{L_{0}^{2}}\right)^{2} - v^{2}},\label{eq:cubic-squared}
\end{gather}
and retain terms of order $v^{2}$ in (\ref{eq:perturbative lambda}) we can derive a complicated closed form expression for $\mu_{n}$.\footnote{Note that $F(k)/k > 1$ for the choice (\ref{eq:cubic-squared}), so we are guaranteed that all modes are subsonic.}  This expression becomes much simpler if we make the physical assumption that the cavity's rest length is much larger than the length scale of exotic physics $L_{0} \gg L_{\star}$, yielding
\begin{equation}\label{eq:perturbative eigenvalues}
	\mu_{n} = n(1-v^{2}) + \frac{\pi^{2}L_{\star}^{2}(1+7v^{2})n^{3}}{L_{0}^{2}} + \mathcal{O}\left( v^{4}, \frac{L_{\star}^{3}}{L_{0}^{3}} \right).
\end{equation}
From this, we obtain the normal mode frequencies
\begin{equation}
	\omega_{n} = \frac{n\pi}{L_{0}} \left[ 1 + \frac{\pi^{2} n^{2} L_{\star}^{2} (1+8v^{2})}{L_{0}^{2}} \right] + \mathcal{O}\left( v^{4}, \frac{L_{\star}^{3}}{L_{0}^{3}} \right).
\end{equation}
we can rewrite this as
\begin{equation}
	\omega_{n} \approx \omega_{n}^{(0)} + \delta\omega_{n}^\text{static} + \delta\omega_{n}^\text{dynamic},
\end{equation}
where $\omega_{n}^{(0)} = n\pi/L_{0}$ is the Lorentz invariant frequency, $\delta\omega_{n}^\text{static}$ is the leading order correction to the frequency of a stationary cavity as derived in section \ref{sec:zero velocity} (with $c_{2}=-1$), and
\begin{equation}
	\delta\omega_{n}^\text{dynamic} = 8v^{2} [ \omega_{n}^{(0)} L_{\star}]^{2}  \omega_{n}^{(0)},
\end{equation}
is the leading order velocity correction to the normal mode frequencies.  Hence, we see that both the leading order static and dynamic corrections to a given normal mode's frequency scale like $[ \omega_{n}^{(0)} L_{\star}]^{2}$.  
  
\subsection{Numerical approximation}

We can estimate the normal mode frequencies numerically using a variation of the Rayleigh-Ritz method.  The idea is to calculate the eigenvalues of $\mathbf{D}$ for matrix sizes $N \le N_{\max}$.  We then approximate the a subset of the eigenvalues of $\mathbf{D}_{\infty}$ by eigenvalues of $\mathbf{D}$ which appear to converge to fixed values as $N \rightarrow N_{\max}$. The associated eigenvectors are approximations to the eigenvectors of $\mathbf{D}_{\infty}$ that are ``mostly'' confined to the subspace spanned by the columns of $\mathbf{D}$.\footnote{Technically speaking, here ``columns of $\mathbf{D}$'' refer to the $\mathbb{C}^{\infty}$ vectors obtained from the actual columns of $\mathbf{D}$ by appropriately appending zeroes.} 

Physically, we can understand why this method works by recalling that by taking $N$ to be finite, we are essentially excluding initial data involving wavelengths $\lesssim L/N$.  Hence, the normal modes identified by this method are the solutions of the wave equation involving wavelengths $\ll L/N_{\max}$; i.e.\ ``infrared'' modes are best approximated by this algorithm.

In figure \ref{fig:converge}, we show how the smallest positive eigenvalues of $\mathbf{D}$ behave as the matrix size is increased for $\hat{F}$ given by (\ref{eq:cubic-squared}).\footnote{The spectrum of $\mathbf{D}$ is symmetric under $\mu \rightarrow -\mu$ for this choice of $\hat{F}$, so we do not show the negative eigenvalues.}  We see that for a rather moderate choice of matrix size ($N \sim 30$), the first ten eigenvalues appear to be approaching their asymptotic values.  In the plots that follow, we take $N = 200$ which implies that the lowest $\sim 25$ eigenvalues and well-converged.
\begin{figure}
\begin{center}
	\includegraphics[width=\columnwidth]{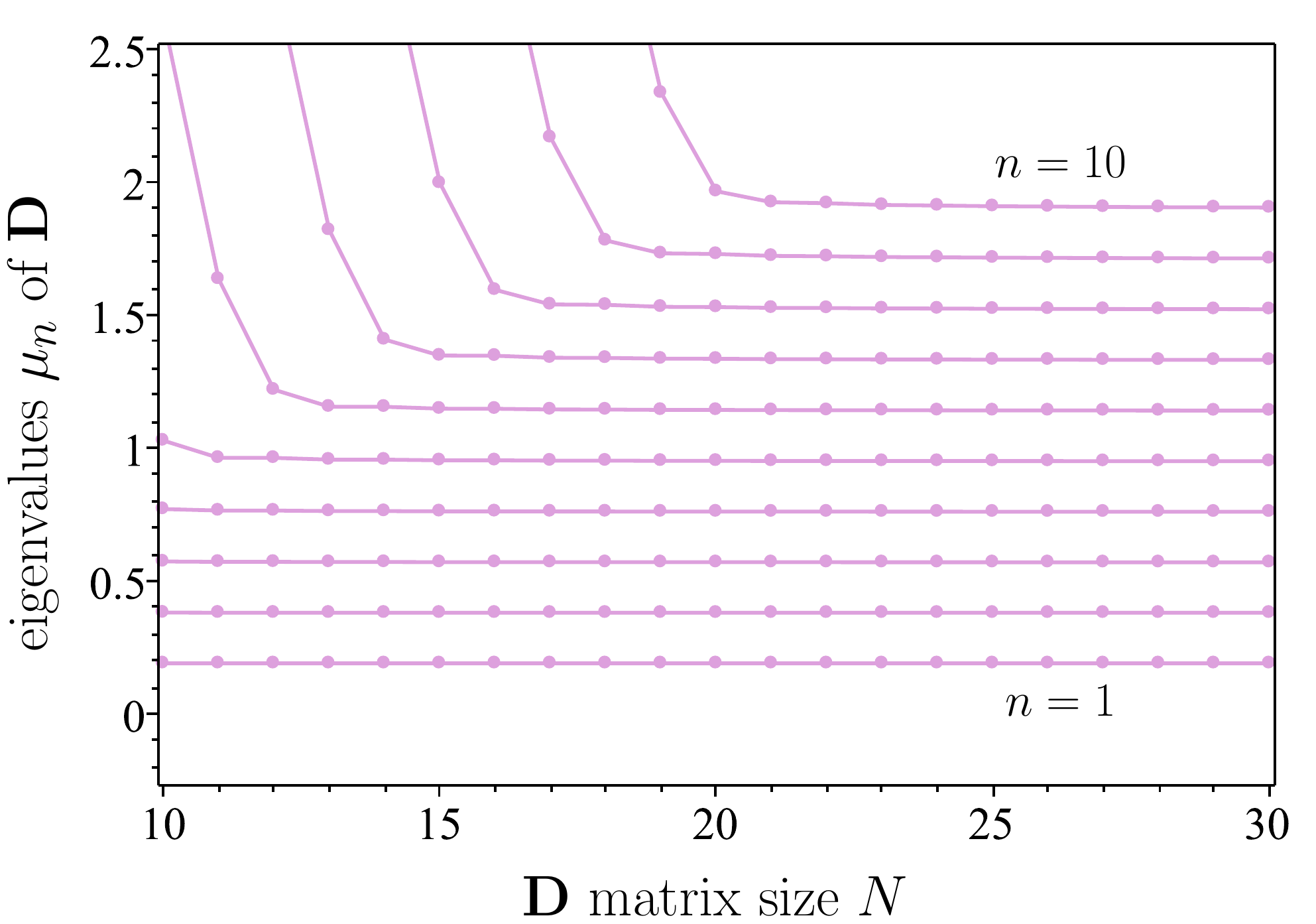}
\end{center}
\caption{Plot demonstrating the convergence of the lowest ten eigenvalues of $\mathbf{D}$ as $N$ is increased.  We have taken $\hat{F}$ to be the cubic polynomial given in (\ref{eq:cubic-squared}) and selected $L_{0}= 20 \pi L_{\star}$ with $v = 0.9$.}\label{fig:converge}
\end{figure}

In figure \ref{fig:epsilon=1}, we show the dependence of the eigenvalues and normal mode frequencies on the cavity velocity for the cubic $\hat{F}$ given in (\ref{eq:cubic-squared}).  We see that at low velocities and for large cavities ($v^{2}\ll 1$ and $L_{0} \gg L_{\star}$) the eigenvalues are $\approx n (1-v^{2})$ and the standard spectrum $\omega \approx n\pi/L_{0}$ is recovered, as expected from section \ref{sec:low v}.  However, for $v^{2} \rightarrow 1$ we see that the $\omega_{n}$ frequencies diverge, as predicted by the analytic analysis of section \ref{sec:high v}.  Qualitatively, the spectra associated with larger cavity resemble the standard case more strongly than the spectra of smaller cavities.
\begin{figure*}
\begin{center}
	\includegraphics[width=\textwidth]{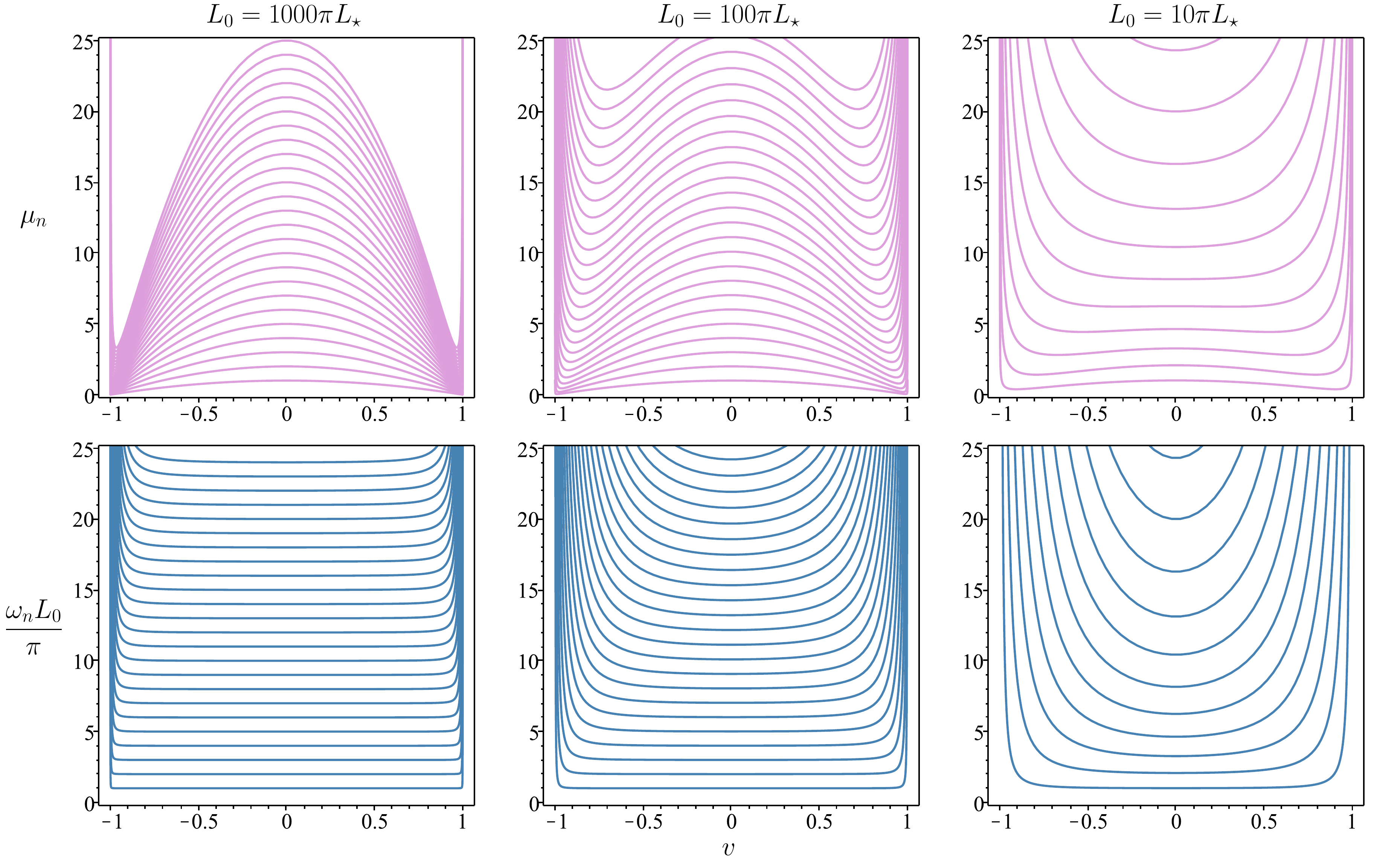}
\end{center}
\caption{Numerical approximations to the positive eigenvalues $\mu_{n}$ of $\mathbf{D}_{\infty}$ and the normal mode frequency $\omega_{n}$ as a function of cavity velocity $v$.  We have take $\hat{F}$ to the cubic polynomial given in (\ref{eq:cubic-squared}).}\label{fig:epsilon=1}
\end{figure*}

In figure \ref{fig:high-v}, we compare the high velocity behaviour of our numeric approximation to the results of section \ref{sec:high v}.  We see excellent agreement between the analytic and numerical results when the mode wavelength $\lambda_{n}^\text{pref}$ is less than $L_{\star}$.
\begin{figure}
\begin{center}
	\includegraphics[width=\columnwidth]{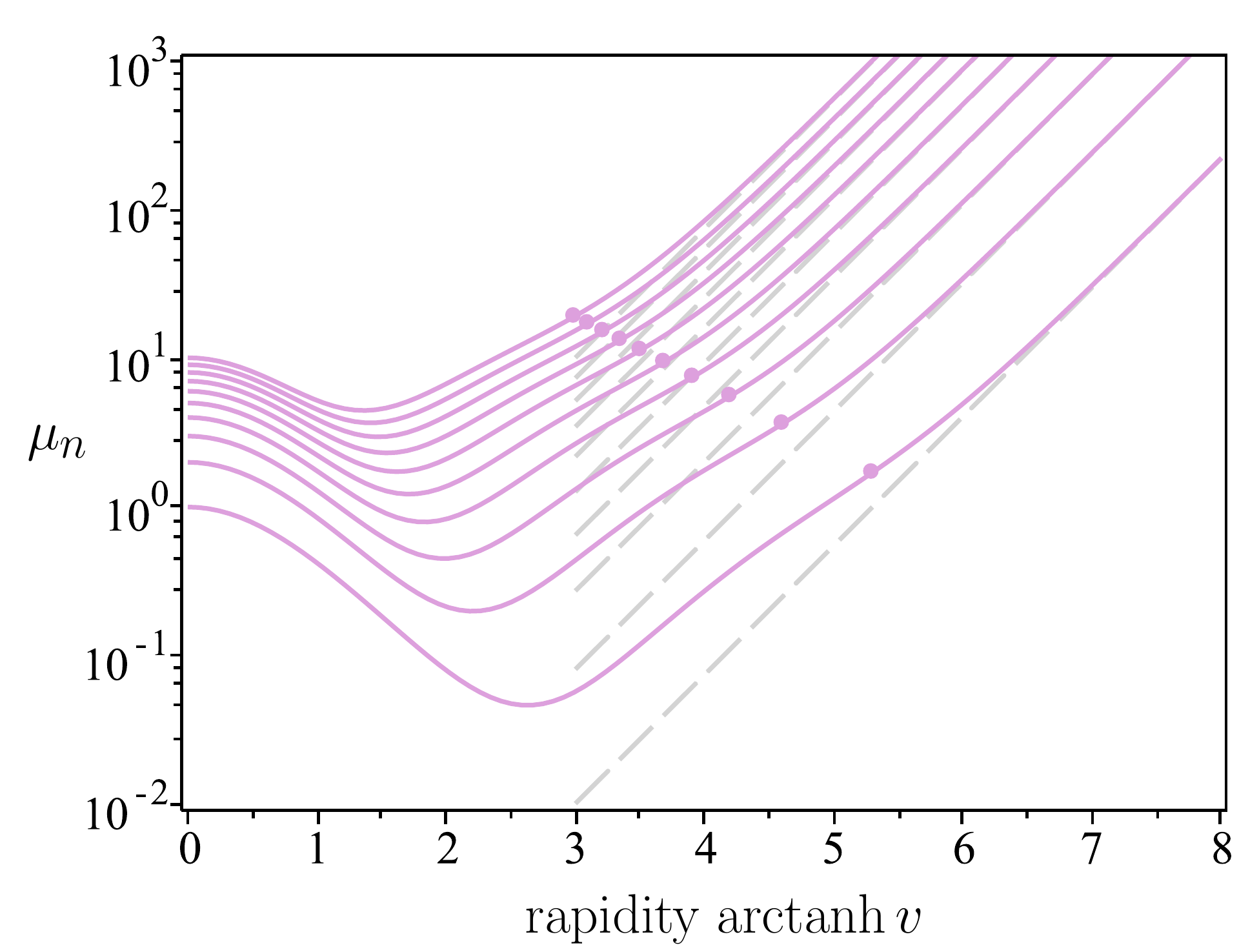}
	\caption{Numerical approximations to the positive eigenvalues $\mu_{n}$ of $\mathbf{D}_{\infty}$ as a function of cavity rapidity $v$ (\emph{solid lines}).  Also plotted is the analytic approximations for the high velocity eigenvalue behaviour derived in section \ref{sec:high v} (\emph{dashed lines}).  The squares on the solid lines where the characteristic wavelength of a given mode equals the exotic physics length scale.  This plot assumes that $\hat{F}$ is given by (\ref{eq:cubic-squared}) with $L_{0} = 100\pi L_{\star}$.}\label{fig:high-v}
\end{center}
\end{figure}

\subsection{Aside: recovering the standard result in the Lorentz invariant case}

Before leaving this section, we comment that all of the above formalism applies to the Lorentz invariant choice $\hat{F}(\di_{X}) = \di_{X}$.  We know that in this case, the normal mode frequencies as measured by comoving observers cannot depend on $v$ and must be given by
\begin{equation}
	\omega_{n} = \frac{n\pi}{L_{0}},
\end{equation}
which in turn implies that the positive eigenvalues of the $\mathbf{D}$ matrix have to be
\begin{equation}\label{eq:LI eigenvalues}
	\mu_{n} = n(1-v^{2}),
\end{equation}
in the $N \rightarrow \infty$ limit.

Now, we note that even when $\hat{F}(\di_{X}) = \di_{X}$, the matrix $\mathbf{D}$ has a non-trivial structure:
\begin{gather}\nonumber
	\mathbf{D} =  \sqrt{1-v^{2}} \begin{pmatrix*}[r]  \bm\aleph & \mathbf{0}  \\  \mathbf{0}   & -\bm\aleph  \end{pmatrix*} + \frac{4iv}{\pi} \begin{pmatrix*}[r] \bm\chi  &  - \bm\chi  \\  - \bm\chi  & \bm\chi  \end{pmatrix*},
\end{gather}
where
\begin{equation}
	\bm{\aleph} = \text{diag}(1,2,3\ldots),
\end{equation}
and the components of $\bm\chi$ are given by
\begin{equation}
	\chi_{nm} = \begin{cases}n^{1/2}m^{3/2}(m^{2}-n^{2})^{-1}, & (n-m) \in \text{odd}, \\ 0, & (n-m) \in \text{even}. \end{cases}
\end{equation}
Written out explicitly, the upper-left portion of the $\bm\chi$ matrix is
\begin{equation}
	\bm{\chi} = \begin{pmatrix*}[c]
	 0 & \frac{2\sqrt{2}}{3} & 0 & \frac{8}{15} & 0 & \cdots \\
	 -\frac{\sqrt{2}}{3} & 0 & \frac{3\sqrt{6}}{5} & 0 & \frac{5\sqrt{10}}{21} & \cdots \\
	 0 & - \frac{2\sqrt{6}}{5} & 0 & \frac{8\sqrt{3}}{7} & 0 & \cdots \\
	 -\frac{2}{15} & 0 & -\frac{6\sqrt{3}}{7} & 0 & \frac{10\sqrt{5}}{9} & \cdots \\
	 0 & - \frac{2\sqrt{10}}{21} & 0 & -\frac{8\sqrt{5}}{9} & 0 & \cdots \\
	 \vdots & \vdots & \vdots & \vdots & \vdots & \ddots
	\end{pmatrix*}
\end{equation}
From this, we see that $\mathbf{D}$ is a rather complicated matrix, and it is not at all clear how to analytically demonstrate that its positive eigenvalues are given by (\ref{eq:LI eigenvalues}).

Since we cannot analytically determine the eigenvalues of $\mathbf{D}_{\infty}$, we instead calculate them numerically.  The results are shown in figure \ref{fig:LI-case}.  We find that the numerical eigenvalues are indeed consistent with $\mu_{n} = n(1-v^{2})$, which is an important consistency check for both our analytic and numerical arguments.
\begin{figure}
\begin{center}
	\includegraphics[width=\columnwidth]{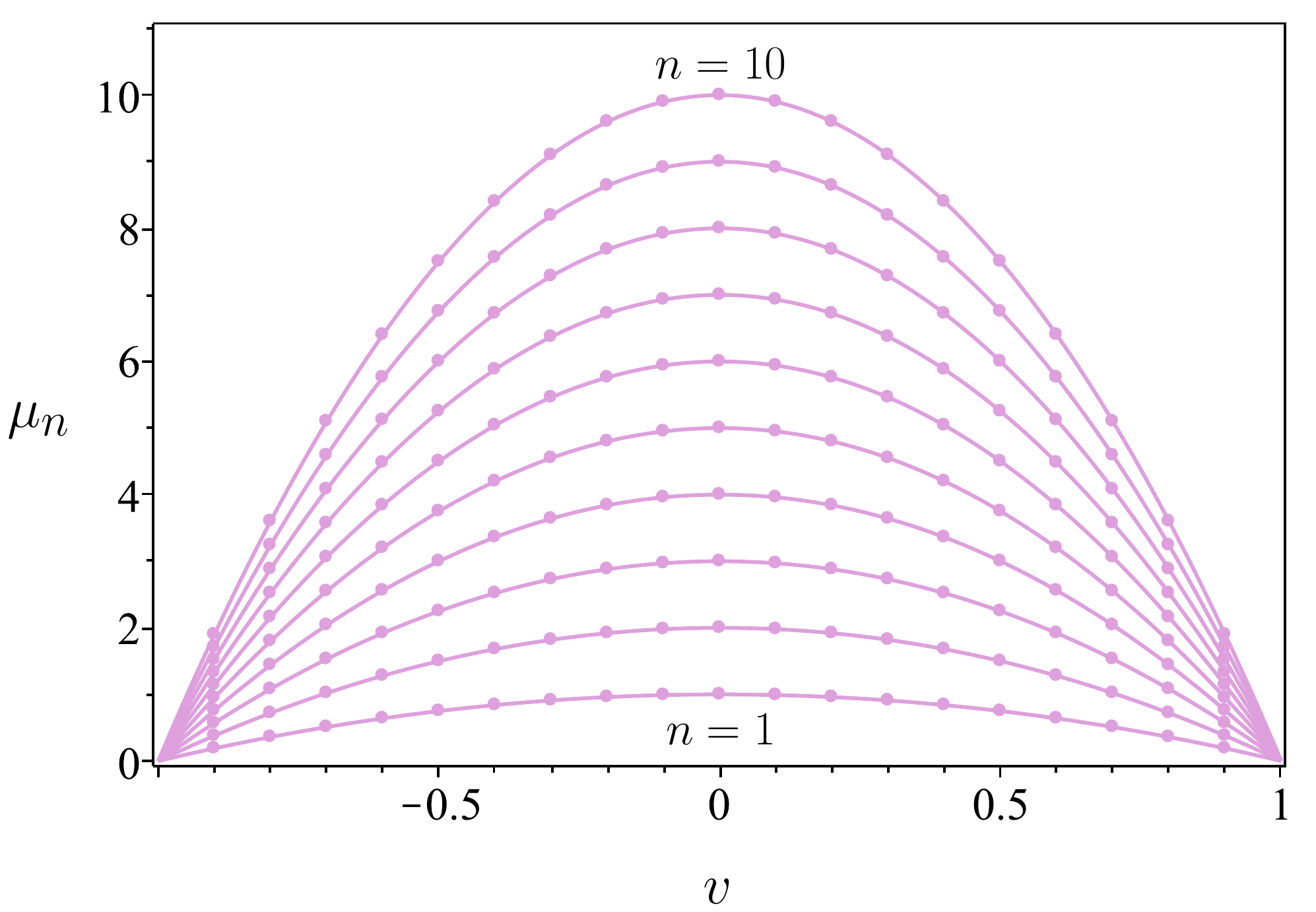}
	\caption{Numerical approximations to the positive eigenvalues $\mu_{n}$ of $\mathbf{D}_{\infty}$ in the Lorentz-invariant case $\hat{F} = \di_{X}$ (\emph{points}).  Shown for comparison are the curves $\mu_{n} = n(1-v^{2})$, which represent our expectations from Lorentz symmetry considerations (\emph{lines}).}\label{fig:LI-case}
\end{center}
\end{figure}

\section{Classical instabilities for supersonic cavity velocity}\label{sec:supersonic}

In this section, we consider the case of a supersonic cavity; i.e., one for which certain modes have phase speed less than the velocity of the cavity's walls (as measured in the preferred frame in the $L \rightarrow \infty$ limit).  In other words, the cavity's walls act as a sonic horizon for some Fourier modes.  In this case, the $\bm\Omega$ matrix given in (\ref{eq:R and Omega formulae}) is not diagonal or positive-definite, so we are guaranteed that a normal mode transformation does not exist by Lemma \ref{lem:5} in Appendix \ref{sec:existence}.

In fact, the classical Hamiltonian is not bounded from below in this case.  To see this, we can plug the classical Fourier mode expansion (\ref{eq:Fourier decomposition 1}) into (\ref{eq:classical Hamiltonian}) to obtain
\begin{equation}
	H = \frac{\pi^{2}}{2L^{2}} {\textbf{a}}^{\H} \bm{\Omega} \textbf{a}, 
\end{equation}
where $\mathbf{a}$ is the vector formed by the classical expansion coefficients $a_{n}$ in a manner similar to (\ref{eq:a vector}).  Let us define two sets of positive integers:
\begin{align}
	\mathbb{S} & = \left\{ n \in \mathbb{Z}^{+} \mid F^{2}(k_{n}) - v^{2} k_{n}^{2} \ge 0 \right\}, \\
	\bar{\mathbb{S}} & = \left\{ n \in \mathbb{Z}^{+} \mid F^{2}(k_{n}) - v^{2} k_{n}^{2} < 0 \right\} = \mathbb{Z}^{+} / \mathbb{S}.
\end{align}
The set $\mathbb{S}$ contains the integer labels of all the modes with mean phase velocity greater than $v$ (``fast'' modes), while the set $\bar{\mathbb{S}}$ contains all the integer labels of the modes with mean phase velocity less than $v$ (``slow'' modes).  Then, the classical Hamiltonian can be written as 
\begin{equation}
	H = \frac{\pi}{L} \sum_{n\in \mathbb{S}} \zeta_{n} |a_{n}|^{2} - \frac{\pi}{L} \sum_{n\in \bar{\mathbb{S}}} \zeta_{n} \text{Re}(a_{n}^{2}). 
\end{equation}
Recalling that $\zeta_{n} \ge 0$, we see that the first sum is associated with the amplitudes of the fast modes and is positive semi-definite.  On the other hand, the second sum is due to the slow modes modes and does not have a fixed sign.  Hence, the classical Hamiltonian is unbounded from below if there are any slow modes present ($\bar{\mathbb{S}} \ne \varnothing$); i.e. if the cavity velocity is supersonic.

\begin{figure*}
\begin{center}
	\includegraphics[width=\textwidth]{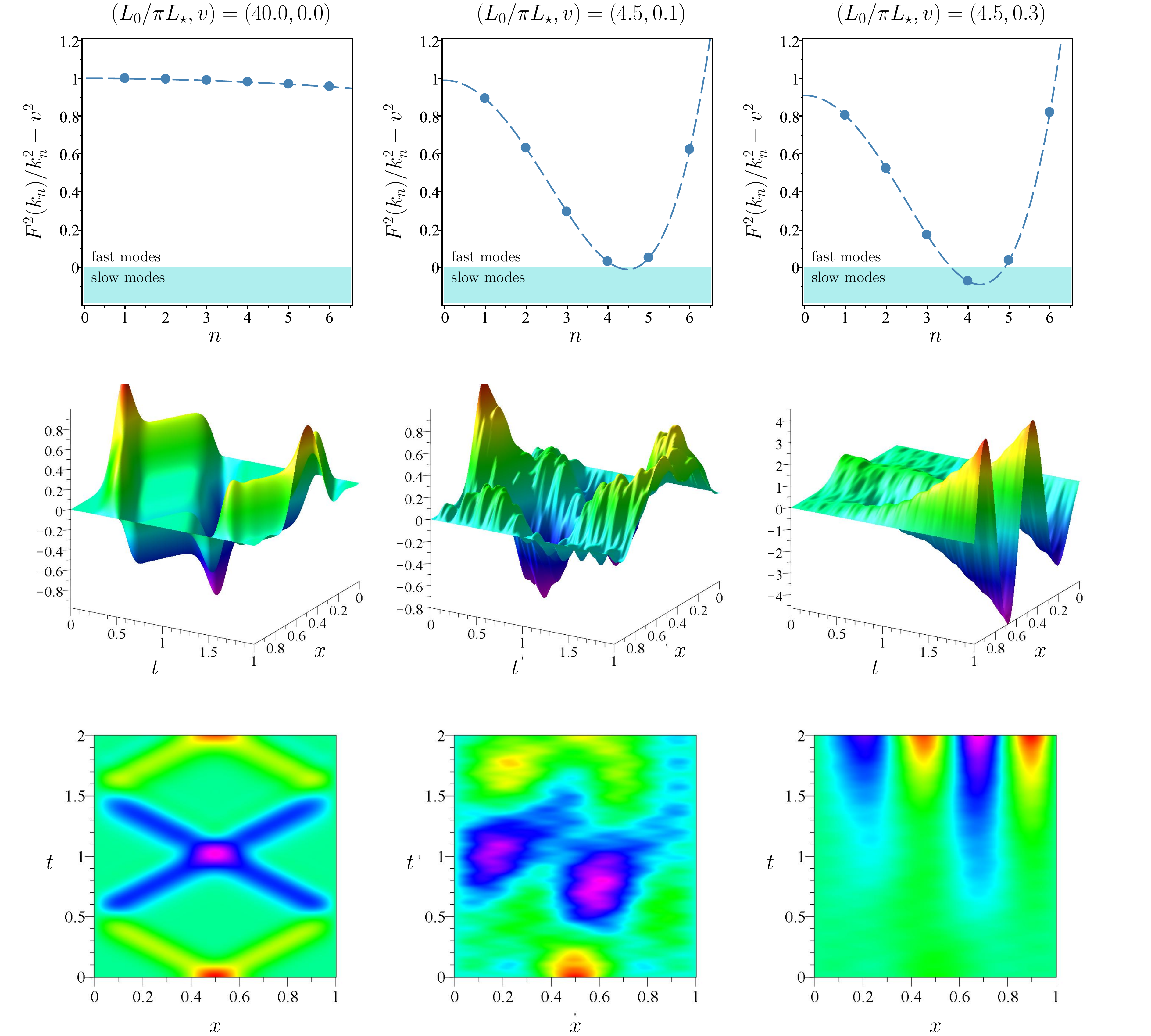}
\end{center}
\caption{Numeric simulations of the solution of the wave equation (\ref{eq:wave equation}) for the choice (\ref{eq:cubic-squared alt}).  The top panels show $F^{2}(k_{n})/k_{n}^{2} - v^{2}$ versus $n$ for the first few Fourier modes of the system and three different sets of parameter values.  The sign of this quantity determines whether or not a given mode is fast or slow.   If there exist any slow modes, the classical Hamiltonian to be unbounded from below and we expect the system to be classically unstable.  The lower panels show numeric simulations for the various parameter choices shown in the top row.  We indeed see that when the system supports even one slow mode, the numerical solution the wave equation appears to exhibit a classical instability.}\label{fig:simulations}
\end{figure*}
The unboundedness of $H$ would seem to imply a classical instability of the system.  To test this, we numerically solved the wave equation with the choice
\begin{gather}\nonumber
	\hat{F}(\di_{x}) = \di_{x} + L_{\star}^{2} \di_{x}^{3}, \quad F(k) = k - L_{\star}^{2} k^{3}, \\ u_{n} = \sqrt{\left| \left(1 - \frac{n^{2} \pi^{2} L_{\star}^{2}}{L_{0}^{2}}\right)^{2} - v^{2} \right| }.\label{eq:cubic-squared alt}
\end{gather}
Depending on the values of $v$ and $L_{0}$, the cavity may or may not possess slow modes.  In figure \ref{fig:simulations}, we plot $F^{2}(k_{n})/k_{n}^{2} - v^{2}$ versus $n$ for several parameter choice along with the associated simulation results.  We clearly see in the figure that if the cavity's velocity is subsonic for every mode (i.e.\ all modes are ``fast''), then the numerical evolution appears to be stable.  On the other hand, if even one mode has mean phase velocity less than the wall velocity (i.e.\ some modes are ``slow''), the numerical solution is exponentially growing.

Based on the unboundedness of the Hamiltonian and numerical simulations, we may be tempted to conclude that the cavity exhibits a classical instability whenever its wall velocity is supersonic in the preferred frame.  However, we caution that the circumstantial evidence presented here does not constitute a proof of this conjecture.

We conclude this section by comparing our results to the discussion in Ref.~\cite{Kostelecky:2000mm}.  In that paper, it was pointed out that if a field has a non-Lorentz-invariant dispersion relation such that the 4-momentum $k^{\alpha}$ of a given mode is spacelike, then then energy of that mode  $E = -k_{\alpha}u^{\alpha}$, as measured by an inertial observer with 4-velocity $u^{\alpha}$, can be either positive or negative.  In \cite{Kostelecky:2000mm}, frames in which the energy is positive or negative were termed ``concordant'' or ``non-concordant'', respectively.  The existence of negative energy modes in non-concordant frames would suggest an instability in the same way that the unboundedness of the Hamiltonian from below does in our system.  

To see how this intuition works in our model, we note that a plane wave solution of the wave equation in the preferred frame in the absence of cavity walls is
\begin{equation}
	\phi = \mathcal{C}e^{ i k_{\alpha}x^{\alpha} } = \mathcal{C}e^{ -i [|F(k)|t-kx]}, \quad k^{\alpha} = [|F(k)|,k,0,0].
\end{equation}
A sufficient condition for the 4-momentum to be spacelike (i.e.\ $k_{\alpha} k^{\alpha}>0$) is that $F^{2}(k) < k^{2}$.  In the preferred frame, the 4-velocity $u^{\alpha}$ of an inertial observer with velocity $v$ in the $x$-direction is given by
\begin{equation}
	 \quad u^{\alpha} = \gamma [1,v,0,0].
\end{equation}
The energy of the mode according this observer is then
\begin{equation}
	E = \gamma[|F(k)|-kv].
\end{equation}	
We see then that $E<0$ is only possible if 
\begin{equation}
	F^{2}(k) - k^{2} v^{2} < 0.
\end{equation}
But this inequality is exactly the condition that slow modes exist in a frame moving at speed $v$ with respect to the preferred frame.  Furthermore, since $v^{2} < 1$ the inequality can only be satisfied if mode's 4-momentum is spacelike.  So, like the authors of \cite{Kostelecky:2000mm}, we find that if there exist mode solutions of the wave equation with spacelike momenta and positive energy in a given frame, there exist other frames where the mode energy is negative and the Hamiltonian is unbounded from below.  In such non-concordant frames, it is not surprising to find a classical instability.  In some sense, we have found an explicit example that supports the general argument found in \cite{Kostelecky:2000mm}.

\section{Constraints from cavity experiments}\label{sec:experiments}

In this section, we consider how experiments an be used to put constraints on Lorentz violations based on the effects described above.  Several  scenarios are possible, but here we concentrate on one particularly precise experiment: namely, rotating cavities (see, for example, \cite{Herrmann:2009zzb}).  The key observation is that, in the presence of Lorentz violation of the form discussed in this paper, the normal mode frequencies of a cavity moving with 3-velocity $\vec{v}$ with respect to the preferred frame will be different if the cavity's boundaries are parallel or perpendicular to $\vec{v}$.  We will need to assume that the results presented in the previous section for scalar fields generalize to spin-1 fields, since actual experiments almost exclusively measure electromagnetic radiation.  This does not seem like a radical assumption based on experience from ordinary quantum field theory, but it is important to keep in mind that electromagnetic field theory is considerably more complicated in models that break Lorentz invariance.  Therefore, the discussion in this section is some what heuristic.  In this section, we take $\hat{F}$ to be of the form (\ref{eq:cubic-squared}) for concreteness.

\begin{figure}
\begin{center}
	\includegraphics[width=\columnwidth]{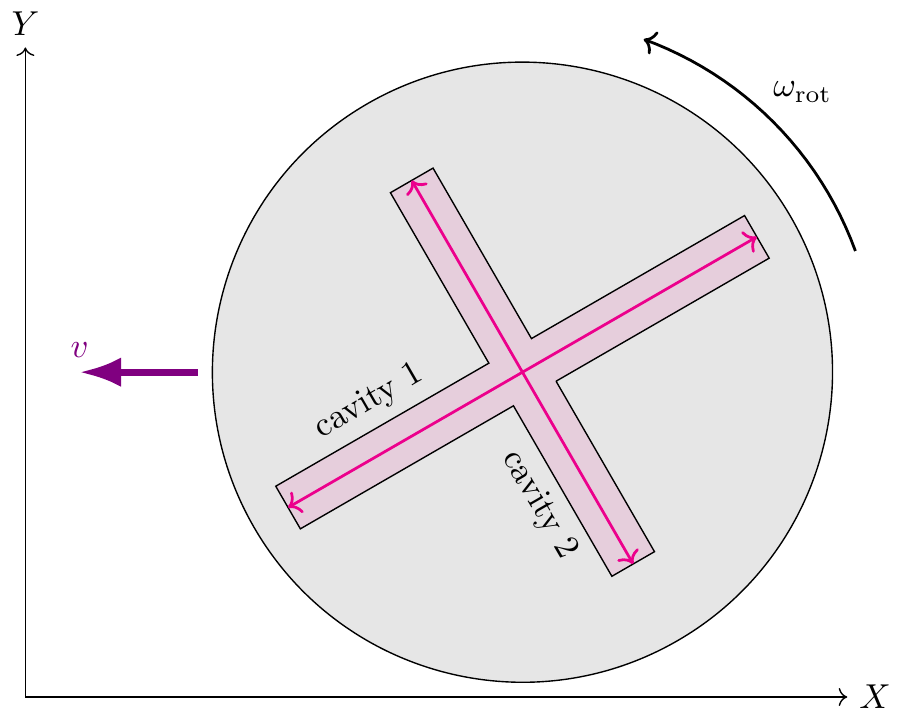}
\end{center}
\caption{A rotating cavity experiment}\label{fig:experiment}
\end{figure}
We are therefore motivated to consider the experimental set-up is shown in figure \ref{fig:experiment}.  As before, we denote cartesian coordinates in the preferred frame by $(X,Y,Z)$.  We consider two orthogonal cavities on a turntable rotating with frequency $\omega_\text{rot}$ in the $XY$-plane.  The centre of the turntable moves with speed $v$ in the negative $X$-direction.  It is easy to see that if $v \ll 1$ and we adopt the cubic-sqared dispersion relation (\ref{eq:cubic-squared}), the results of section \ref{sec:low v} give
\begin{align}
	\frac{\omega_{n}}{\omega_n^{(0)}} \approx 1 + [\omega_n^{(0)} L_{\star}]^{2} \times  \begin{cases} 1+8v^{2} ,  & \text{$X$ orientation}, \\   1 , &\text{$Y$ orientation}. 
\end{cases}
\end{align}
where the $\omega_{n}^{(0)} = n\pi/L_{0}$ are the normal mode frequencies in the absence of Lorentz violation.  Here, ``$X$ orientation'' and ``$Y$ orientation'' refer to the cases where the normals to the cavity boundaries are parallel to the $X$ and $Y$ axes, respectively.  We also assume that a pair of tuneable lasers are stabilized to normal modes of frequency $\omega_{1} > \omega_{2}$ in cavities 1 and 2 respectively.  The beat frequency between the two lasers $\Delta\omega = (\omega_{1} - \omega_{2})/2$ can then be easily measured by siphoning-off and combining a small amount of output from each laser.  As the table rotates, $\Delta\omega$ will be modulated with frequency $2\omega_\text{rot}$ as the cavities' orientation relative to the turnable velocity changes.  The amplitude $A$ of the modulation will be
\begin{equation}
	A \sim 4L_{\star}^{2}v^{2}(\omega_{1}+\omega_{2})(\omega_{1}^{2} - \omega_{1}\omega_{2} + \omega_{2}^{2}).
\end{equation}
Finally, we can make the simplifying assumption that the frequencies each laser are reasonably close to each other $\omega_{1} \approx \omega_{2} \approx \omega_{0}$:
\begin{equation}
	\frac{A}{\omega_{0}} \sim 8 \omega_{0}^{2}L_{\star}^{2}v^{2}.
\end{equation}

An terrestrial experiment of this type was done by \citet{Herrmann:2009zzb}, who found that $A/\omega_{0} \lesssim 10^{-16}$ using lasers tuned to frequency $\omega_{0} = 1.17\,\text{eV}$.  Assuming that the preferred frame is defined by the cosmic microwave background (CMB) and noting that the speed of the earth with respect to the CMB is $v \sim 2 \times 10^{-3}$, this gives $L_{\star}^{-1} \gtrsim 7 \times 10^{-4} \, \text{GeV}$.

We note that $L_{\star}$ for this dispersion relation can separately constrained by observations of gamma-ray bursts.  If we translate the results of \citet{Vasileiou:2013vra} into our notation, one finds the constraint $L_{\star}^{-1} \gtrsim 10^{10} \,\text{GeV}$ from astrophysical observations.  Clearly, this is a much more stringent result than can be obtained from the terrestrial experiment described above.

Is it possible to do better?  We saw in section \ref{sec:subsonic} that the relative changes to the resonant frequency of a given the cavity becomes significant (i.e. $\Delta \omega / \omega_{0} \gtrsim \mathcal{O}(1)$) if
\begin{equation}
	\frac{\omega_{0} L_{\star}}{\sqrt{1-v^{2}}} \gtrsim 1.
\end{equation}
Assuming $\omega_{0} L \ll 1$, this becomes
\begin{equation}
	|v| \gtrsim 1 - \tfrac{1}{2} \omega_{0}^{2} L_{\star}^{2} 
\end{equation}
Hence, in order to obtain better a better constraint on $L_{\star}$ than the one from gamma ray bursts from an experiment using a modes with $\omega_{0} \approx 1\,\text{eV}$, one would need to conduct the experiment in a lab travelling at a speed of $v \gtrsim 1 - 5 \times 10^{-39}$ with respect to the preferred frame.  Obviously, this represents a significant technical (and energetic) challenge if the preferred frame is the CMB.

\section{Discussion}\label{sec:discussion}

In this paper, we have considered scalar fields with non-Lorentz invariant dispersion relations confined in a quantum cavity.  If the walls of the cavity move with a speed exceeding the speed of the scalar modes, we have presented evidence that the system is classically unstable and, as mentioned in the introduction, a Bogoliubov transformation to a basis diagonalizing the Hamiltonian may not exist.  The fact that we are not guaranteed the existence of a Bogoliubov transformation diagonalizing the Hamiltonian, and in turn a well defined energy spectrum, is not a novel feature of our model; indeed similar phenomena can be found in the literature albeit in a different context. For example in \cite{Kostelecky:2000mm} the authors find that dispersion relations induced by Lorentz violation in the electromagnetic sector of SME can lead to negative energies in certain observer frames;  the nomenclature used for these frames is "non-concordant", where a "concordant frame" is one in which modifications due to Lorentz violations remain small.  Boosting to non-concordant frames leads to failure of the canonical quantization procedure and may lead to stability problems when interactions are introduced.  Additionally, violations of microcausality may also arise as one particle dispersion relations can develop group velocities exceeding $1$.  For further discussion on non-concordant frames see \cite{Kostelecky:2001xz,Lehnert:2002yi}; to see how these issues may be overcome via an extended Hamiltonian formalism see \cite{Colladay:2017bon,Colladay:2017cey}.

If however the walls move slower than the speed of all the scalar modes, we find that the energy spectrum of the cavity depends on the inertial velocity of the cavity with respect to a preferred frame in a non-trivial way.  Our results are significantly different from previously reported results for the mSME \cite{Muller:2004zp}, where the changes to a cavities resonant frequencies are effectively modelled by a velocity dependent index of refraction between the plates.  This implies that the spacing between adjacent resonant frequencies $\omega_{n+1} - \omega_{n}$ is independent of $n$ in the mSME (just as in standard theory).  In contrast, for our model involving higher dimensional operators,  $\omega_{n+1} - \omega_{n}$ will generally depend on $n$.  Because our results are directly tied to the existence of dimension 6 and higher operators in the effective action, the effects of Lorentz violation at low energies is exceedingly small if the speed of the cavity with respect to the preferred frame is not too large.  However, if the cavity speed approaches $1$, then the energy levels of the cavity become infinite.

This work can be extended in many ways:  To make the model more realistic, the behaviour of electromagnetic fields in a finite moving cavities should be considered.  It would also be interesting to determine the thermodynamic properties of a gas of quanta confined to a moving cavity following \cite{Husain:2013zda}; i.e., solve the blackbody problem for this system.  One can also extend the calculation of the Casimir effect in a cavity at rest in the preferred frame \cite{1751-8121-41-16-164021} to a cavity with non-zero velocity.  The dynamical Casimir effect, where the mismatch between velocity of the two cavity walls results in particle creation, could be interesting from both a theoretical and experimental point of view.  Finally, it would be useful to apply the formalism of this paper to the Unruh effect and black hole radiation.  In the former case, one would hope to get some physical insight into the main result presented in \citet{Husain:2015tna}, which was that certain Lorentz-violating dispersion relations lead to radically different behaviours for inertial particle detectors at low energy.

\appendix

\section{Existence of the Bogliobov transformation}\label{sec:existence}

In this appendix, we show that if $\bm\Omega$ is positive definite, then we can find a Bogliobov transformation $\mathbf{T}$ satisfying the conditions \ref{eq:condition 0}--\ref{eq:condition 2} from Section \ref{sec:Bogliobov}.  Here, the matrices $\mathbf{R}$ and $\mathbf{\Omega}$ are defined in equations (\ref{eq:R def}) and (\ref{eq:Omega def}), respectively.  Throughout this appendix, we assume all matrices are of finite size $2N \times 2N$, which allows us to make liberal use of elementary results from linear algebra.  Our analysis loosely follows the arguments presented in \citet{2009arXiv0908.0787X}, which deals with the Hamiltonian diagonalization problem given an already orthogonal basis.

\begin{lemma}\label{lem:basic results}

The matrices $\mathbf{R}$ and $\mathbf{\Omega}$ have the following properties
\begin{align}
	& \mathbf{R}=  \mathbf{R}^\text{\emph{H}}, \\ 
	& \mathbf{\Omega}=  \mathbf{\Omega}^\text{\emph{H}}, \\
	& \bm{\Gamma} \mathbf{R}^{*} \mathbf{\Gamma} = - \mathbf{R}, \label{eq:lem 1-3} \\
	& \bm{\Gamma} \mathbf{\Omega}^{*} \mathbf{\Gamma} = \mathbf{\Omega}.  \label{eq:lem 1-4}
\end{align}

\end{lemma}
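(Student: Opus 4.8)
The plan is to reduce all four identities to the component-level algebra of the conserved inner product, exploiting the fact that the stacked mode vector $\bm\psi$ of (\ref{eq:a vector}) has Bogoliubov--Valatin structure: its lower block is the complex conjugate of its upper block, so $\bm\psi^{*}=\bm\Gamma\bm\psi$. Componentwise this reads $\overline{\bm\psi_{i}}=\bm\psi_{\tilde\imath}$, where $\tilde\imath$ denotes the index swap $i\mapsto i\pm N$ implemented by $\bm\Gamma$; since $\bm\Gamma^{2}=\mathbf{I}$ the swap is an involution, $\tilde{\tilde\imath}=i$. First I would feed this into the definitions (\ref{eq:S def}), (\ref{eq:R def}) and (\ref{eq:Omega def}) to collapse the block forms into two uniform scalar formulas, $\mathbf{R}_{ij}=-\frac{\pi}{L}(\bm\psi_{\tilde\imath},\bm\psi_{\tilde\jmath})$ and $\bm\Omega_{ij}=-i(\bm\psi_{\tilde\imath},\di_{t}\bm\psi_{\tilde\jmath})$, valid for all indices $i,j\in\{1,\dots,2N\}$. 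Deriving these is a one-line index chase using $\bm\Gamma_{kj}=\delta_{k\tilde\jmath}$ together with the Hermiticity of the inner product $(\phi,\psi)^{*}=(\psi,\phi)$, which is itself read off from (\ref{eq:identities}).

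With these formulas in hand, the two Hermiticity statements are immediate. For $\mathbf{R}^{\H}=\mathbf{R}$ I would conjugate and transpose $\mathbf{R}_{ij}$ and apply $(\phi,\psi)^{*}=(\psi,\phi)$ directly. For $\bm\Omega^{\H}=\bm\Omega$ the relevant tool is the first equality of (\ref{eq:identity 3}), $(\phi,\dot\psi)=-(\psi,\dot\phi)^{*}$: applied with $\phi=\bm\psi_{\tilde\imath}$ and $\psi=\bm\psi_{\tilde\jmath}$ it supplies exactly the sign needed to cancel the explicit factor $-i$ when passing from $\overline{\bm\Omega_{ji}}$ back to $\bm\Omega_{ij}$.

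The two $\bm\Gamma$-conjugation identities (\ref{eq:lem 1-3}) and (\ref{eq:lem 1-4}) are handled the same way. Conjugating and sandwiching by $\bm\Gamma$ sends the index pair $(i,j)\mapsto(\tilde\imath,\tilde\jmath)$ and conjugates, and because the swap is an involution one is left comparing $\overline{(\bm\psi_{i},\bm\psi_{j})}$ (respectively $\overline{(\bm\psi_{i},\di_{t}\bm\psi_{j})}$) with $\mathbf{R}_{ij}$ (respectively $\bm\Omega_{ij}$). The bridge for $\mathbf{R}$ is the reflection identity $(\phi^{*},\psi^{*})=-(\psi,\phi)$ of (\ref{eq:identities}): writing $\bm\psi_{i}=\overline{\bm\psi_{\tilde\imath}}$ converts the conjugated product into minus the swapped one, producing the overall minus sign wanted in (\ref{eq:lem 1-3}). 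For $\bm\Omega$ I would instead use the second equality of (\ref{eq:identity 3}), $(\phi,\dot\psi)=(\psi^{*},\dot\phi^{*})$, combined with the Hermiticity-type relation already used for $\bm\Omega$ above; the two sign flips then compose to $+$, establishing (\ref{eq:lem 1-4}).

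I do not anticipate a genuine obstacle here: every step is finite-dimensional linear algebra with no limits or analytic estimates, and the whole argument rests on the three componentwise relations packaged in (\ref{eq:identities}) and (\ref{eq:identity 3}). The only point demanding care is the bookkeeping of the involution $\tilde{\,}$ and, at each conjugation or transposition, selecting precisely the correct one of the three inner-product identities (plain Hermiticity, the reflection identity, or the derivative identities), since invoking the wrong one flips a sign and breaks the argument.
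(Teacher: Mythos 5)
Your proposal is correct and is essentially the paper's own argument: the paper's proof is a one-line assertion that the lemma follows by direct verification from the definitions (\ref{eq:R def}) and (\ref{eq:Omega def}) together with the inner-product identities (\ref{eq:identities}), and your component-level computation (organized via the involution $\bm\psi^{*}=\bm\Gamma\bm\psi$ and supplemented by (\ref{eq:identity 3}) for the $\bm\Omega$ statements) is precisely that verification, with all signs checking out against the explicit block forms in (\ref{eq:R and Omega formulae}).
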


\begin{proof}

These results can be verified from the definitions (\ref{eq:R def}) and (\ref{eq:Omega def}) making use of the identities (\ref{eq:identities}).

\end{proof}

\begin{definition}\label{def:inner}

The $\mathbf{R}$-inner product between two vectors is defined by
\begin{equation}
	\langle \mathbf{u} , \mathbf{v} \rangle = \mathbf{u}^{\H} \mathbf{R} \mathbf{v}.
\end{equation}

\end{definition}

\begin{lemma}\label{lem:eigenvectors}

If $\mathbf{v}_{n}$ is an eigenvector of $\mathbf{D}$ with eigenvalue $\mu_{n}$, then $\mathbf{u}_{n} = \mathbf{\Gamma} \mathbf{v}_{n}^{*}$ is also an eigenvector of $\mathbf{D}$ with eigenvalue $-\mu_{n}^{*}$.  Furthermore, the eigenvectors satisfy
\begin{equation}\label{eq:eigenvector inner products}
	\langle \mathbf{v}_{n},\mathbf{v}_{m}\rangle^{*} = -\langle \mathbf{u}_{n},\mathbf{u}_{m}\rangle, \quad \langle \mathbf{v}_{n}, \mathbf{u}_{m} \rangle = - \langle \mathbf{v}_{m}, \mathbf{u}_{n} \rangle.
\end{equation}

\end{lemma}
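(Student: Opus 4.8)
The plan is to split the lemma into its two assertions---the pairing of eigenvalues under the map $\mathbf{v}_{n}\mapsto\mathbf{u}_{n}=\bm{\Gamma}\mathbf{v}_{n}^{*}$, and the two symmetry relations for the $\mathbf{R}$-inner products---and to dispatch each using nothing more than the algebraic identities collected in Lemma \ref{lem:basic results} together with Definition \ref{def:inner}. No spectral theory or positivity is required at this stage; everything reduces to careful bookkeeping with Hermitian conjugation and the sign/reality properties \eqref{eq:lem 1-3} and \eqref{eq:lem 1-4}.

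For the eigenvalue claim, I would first rewrite $\mathbf{D}\mathbf{v}_{n}=\mu_{n}\mathbf{v}_{n}$ in the form $\mathbf{\Omega}\mathbf{v}_{n}=\mu_{n}\mathbf{R}\mathbf{v}_{n}$, which is legitimate since $\mathbf{D}=\mathbf{R}^{-1}\mathbf{\Omega}$ and $\mathbf{R}$ is invertible. Taking the complex conjugate yields $\mathbf{\Omega}^{*}\mathbf{v}_{n}^{*}=\mu_{n}^{*}\mathbf{R}^{*}\mathbf{v}_{n}^{*}$. Using $\bm{\Gamma}^{2}=\mathbf{I}$ with \eqref{eq:lem 1-4} to write $\mathbf{\Omega}^{*}=\bm{\Gamma}\mathbf{\Omega}\bm{\Gamma}$, and \eqref{eq:lem 1-3} to write $\mathbf{R}^{*}=-\bm{\Gamma}\mathbf{R}\bm{\Gamma}$, I would then left-multiply by $\bm{\Gamma}$ and collect $\bm{\Gamma}\mathbf{v}_{n}^{*}=\mathbf{u}_{n}$. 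This lands exactly on $\mathbf{\Omega}\mathbf{u}_{n}=-\mu_{n}^{*}\mathbf{R}\mathbf{u}_{n}$, i.e.\ $\mathbf{D}\mathbf{u}_{n}=-\mu_{n}^{*}\mathbf{u}_{n}$, which is the stated pairing. The single minus sign distinguishing \eqref{eq:lem 1-3} from the sign-free \eqref{eq:lem 1-4} is precisely what flips $\mu_{n}$ to $-\mu_{n}^{*}$.

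For the inner-product identities I would expand everything through Definition \ref{def:inner}. For the first relation, conjugating the scalar $\langle\mathbf{v}_{n},\mathbf{v}_{m}\rangle=\mathbf{v}_{n}^{\H}\mathbf{R}\mathbf{v}_{m}$ gives $\langle\mathbf{v}_{n},\mathbf{v}_{m}\rangle^{*}=\mathbf{v}_{n}^{\T}\mathbf{R}^{*}\mathbf{v}_{m}^{*}$, while substituting $\mathbf{u}_{n}=\bm{\Gamma}\mathbf{v}_{n}^{*}$ and using $\bm{\Gamma}^{\H}=\bm{\Gamma}$ together with $\bm{\Gamma}\mathbf{R}\bm{\Gamma}=-\mathbf{R}^{*}$ (an immediate rearrangement of \eqref{eq:lem 1-3}) gives $\langle\mathbf{u}_{n},\mathbf{u}_{m}\rangle=-\mathbf{v}_{n}^{\T}\mathbf{R}^{*}\mathbf{v}_{m}^{*}$; comparing the two produces $\langle\mathbf{v}_{n},\mathbf{v}_{m}\rangle^{*}=-\langle\mathbf{u}_{n},\mathbf{u}_{m}\rangle$. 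For the antisymmetry $\langle\mathbf{v}_{n},\mathbf{u}_{m}\rangle=-\langle\mathbf{v}_{m},\mathbf{u}_{n}\rangle$, the device is that a scalar equals its own transpose: transposing $\mathbf{v}_{n}^{\H}\mathbf{R}\bm{\Gamma}\mathbf{v}_{m}^{*}$, using $\mathbf{R}^{\T}=\mathbf{R}^{*}$ (from $\mathbf{R}=\mathbf{R}^{\H}$) and then $\bm{\Gamma}\mathbf{R}^{*}=-\mathbf{R}\bm{\Gamma}$, I arrive at $-\mathbf{v}_{m}^{\H}\mathbf{R}\bm{\Gamma}\mathbf{v}_{n}^{*}=-\langle\mathbf{v}_{m},\mathbf{u}_{n}\rangle$.

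These calculations are short, so I anticipate no serious obstacle; the only place demanding care is the consistent handling of conjugation versus transposition on the scalar quantities---in particular remembering that $\langle\cdot,\cdot\rangle$ is conjugate-linear in its first slot---and tracking that lone minus sign from \eqref{eq:lem 1-3} at each step. Getting that sign bookkeeping right is exactly what makes both the eigenvalue reflection $\mu_{n}\to-\mu_{n}^{*}$ and the antisymmetry of the $\langle\mathbf{v},\mathbf{u}\rangle$ pairing emerge correctly.
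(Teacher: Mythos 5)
Your proposal is correct and follows essentially the same route as the paper: the eigenvalue pairing via the identities $\bm{\Gamma}\mathbf{\Omega}^{*}\bm{\Gamma}=\mathbf{\Omega}$ and $\bm{\Gamma}\mathbf{R}^{*}\bm{\Gamma}=-\mathbf{R}$ applied to $(\mathbf{\Omega}-\mu_{n}\mathbf{R})\mathbf{v}_{n}=0$, and the inner-product relations by direct expansion from Definition \ref{def:inner} together with $\mathbf{u}_{n}=\bm{\Gamma}\mathbf{v}_{n}^{*}$. Your write-up in fact makes explicit the sign bookkeeping (including the scalar-transpose trick and $\mathbf{R}^{\T}=\mathbf{R}^{*}$) that the paper compresses into ``follow directly from Lemma \ref{lem:basic results}.''
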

\begin{proof}

If $\mathbf{v}_{n}$ and $\mu_{n}$ are solutions to the eigenvalue problem for $\mathbf{D} = \mathbf{R}^{-1}\mathbf{\Omega}$, then it follows that
\begin{equation}
	(\mathbf{\Omega}-\mu_{n} \mathbf{R}) \mathbf{v}_{n} = 0.
\end{equation}
This can be re-arranged to read
\begin{equation}
	[\mathbf{\Gamma} (\mathbf{\Omega}-\mu_{n} \mathbf{R})^{*} \mathbf{\Gamma}]  (\mathbf{\Gamma}\mathbf{v}^{*}_{n})=0.
\end{equation}
Using (\ref{eq:lem 1-3}) and (\ref{eq:lem 1-4}), this can be re-cast as
\begin{equation}
	(\mathbf{\Omega}+\mu^{*}_{n} \mathbf{R}) \mathbf{u}_{n} = 0, \quad \mathbf{u}_{n} = \mathbf{\Gamma} \mathbf{v}_{n}^{*}.
\end{equation}
Hence, $\mathbf{u}_{n} = \mathbf{\Gamma} \mathbf{v}_{n}^{*}$ is an eigenvector of $\mathbf{D}$ with eigenvalue $-\mu^{*}_{n}$.  The relations (\ref{eq:eigenvector inner products}) follow directly from Lemma \ref{lem:basic results}, Definition \ref{def:inner}, and $\mathbf{u}_{n} = \mathbf{\Gamma} \mathbf{v}_{n}^{*}$.

\end{proof}

\begin{definition}\label{def:normal mode xform}

A normal mode transformation is a matrix $\mathbf{T}$ such that
\begin{enumerate}[label={\textup{[\roman*]}}]
	\item \label{eq:T condition} $\mathbf{T}^{\H} \textbf{\textup{R}} \mathbf{T} = \bm{\Sigma}$,
	\item  \label{eq:T condition 2} $\mathbf{T} = \bm{\Gamma} \mathbf{T}^{*} \bm{\Gamma}$, and
	\item \label{eq:T diagonal}  $\mathbf{T}^{-1}\mathbf{D}\mathbf{T}  = \textup{diag}(\mu_{1},\ldots,\mu_{N},-\mu_{1},\ldots,-\mu_{N}) \equiv \bm{\Lambda}$.
\end{enumerate}
Here,  $\mathbf{T}^{-1} = \mathbf{\Sigma} \mathbf{T}^\mathrm{H} \mathbf{R}$ and $\mu_{n} >0$.

\end{definition}

\begin{definition}

A set of vectors $S$ is mutually $\mathbf{R}$-orthogonal if for all $\mathbf{w}_{1},\mathbf{w}_{2} \in S$, $\mathbf{w}_{1} \ne \mathbf{w}_{2}$ implies $ \langle \mathbf{w}_{1} , \mathbf{w}_{2} \rangle = 0$.

\end{definition}

\begin{lemma}\label{lem:T} 

Suppose that

\begin{enumerate}[label={\textup{[\alph*]}}]
\item there exists a set of $N$ mutually $\mathbf{R}$-orthogonal eigenvectors of $\mathbf{D}$ given by ${B} = \{ {\mathbf{v}}_{n}, {\mathbf{u}}_{n} \}_{n=1}^{N}$ with ${\mathbf{u}}_{n} = \mathbf{\Gamma} {\mathbf{v}}_{n}^{*}$;
\item $\langle {\mathbf{v}}_{n} , {\mathbf{v}}_{n} \rangle = 1$; and
\item the eigenvalue associated with $ {\mathbf{v}}_{n} $ is positive.
\end{enumerate} 
Then there exists normal mode transformation $\mathbf{T}$ with the $\mu_{n}$ given by the eigenvalues of $\mathbf{D}$ associated the $ {\mathbf{v}}_{n} $ eigenvectors. 

\end{lemma}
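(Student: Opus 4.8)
The plan is to construct $\mathbf{T}$ explicitly as the matrix whose columns are the $2N$ eigenvectors supplied by hypothesis [a], ordered so that $\mathbf{T} = [\,\mathbf{v}_1 \mid \cdots \mid \mathbf{v}_N \mid \mathbf{u}_1 \mid \cdots \mid \mathbf{u}_N\,]$, and then to verify the three defining properties of Definition \ref{def:normal mode xform} one at a time. The entire argument is a construction-and-verification, so no estimates are needed.

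First I would check condition \ref{eq:T condition 2}, i.e.\ $\mathbf{T} = \bm{\Gamma}\mathbf{T}^{*}\bm{\Gamma}$. Since $\bm{\Gamma}$ is real with $\bm{\Gamma}^{2} = \mathbf{I}$, the hypothesis $\mathbf{u}_n = \bm{\Gamma}\mathbf{v}_n^{*}$ immediately gives $\bm{\Gamma}\mathbf{u}_n^{*} = \mathbf{v}_n$. Right-multiplication by $\bm{\Gamma}$ interchanges the two blocks of $N$ columns, while left-multiplication by $\bm{\Gamma}$ swaps the upper and lower halves of each column; tracking a single column through $\bm{\Gamma}\mathbf{T}^{*}\bm{\Gamma}$ then sends the $\mathbf{v}_n$-column to $\bm{\Gamma}\mathbf{u}_n^{*} = \mathbf{v}_n$ and the $\mathbf{u}_n$-column to $\bm{\Gamma}\mathbf{v}_n^{*} = \mathbf{u}_n$, reproducing $\mathbf{T}$. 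This property is thus built in by construction.

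Next I would establish condition \ref{eq:T condition}, $\mathbf{T}^{\H}\mathbf{R}\mathbf{T} = \bm{\Sigma}$, by reading off its entries as the $\mathbf{R}$-inner products $\langle\cdot,\cdot\rangle$ of the columns (Definition \ref{def:inner}). Mutual $\mathbf{R}$-orthogonality (hypothesis [a]) kills every off-diagonal entry, including the mixed blocks $\langle\mathbf{v}_n,\mathbf{u}_m\rangle$; for the $n=m$ case I would instead invoke the antisymmetry $\langle\mathbf{v}_n,\mathbf{u}_n\rangle = -\langle\mathbf{v}_n,\mathbf{u}_n\rangle$ from Lemma \ref{lem:eigenvectors}. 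Normalization [b] gives $\langle\mathbf{v}_n,\mathbf{v}_n\rangle = 1$, and the first relation of Lemma \ref{lem:eigenvectors} forces $\langle\mathbf{u}_n,\mathbf{u}_n\rangle = -\langle\mathbf{v}_n,\mathbf{v}_n\rangle^{*} = -1$, so the diagonal reads $(+1,\ldots,+1,-1,\ldots,-1) = \bm{\Sigma}$. With $\mathbf{R}$ invertible and $\det\bm{\Sigma}=(-1)^{N}\neq 0$, this forces $\det\mathbf{T}\neq0$; moreover multiplying $\mathbf{T}^{\H}\mathbf{R}\mathbf{T}=\bm{\Sigma}$ on the left by $\bm{\Sigma}$ and using $\bm{\Sigma}^{2}=\mathbf{I}$ yields the stated inverse $\mathbf{T}^{-1}=\bm{\Sigma}\mathbf{T}^{\H}\mathbf{R}$.

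Finally condition \ref{eq:T diagonal} follows because the columns are eigenvectors of $\mathbf{D}$: positivity of the eigenvalues (hypothesis [c]) makes each $\mu_n$ real, so Lemma \ref{lem:eigenvectors} assigns $\mathbf{u}_n$ the eigenvalue $-\mu_n^{*}=-\mu_n$, giving $\mathbf{D}\mathbf{T}=\mathbf{T}\bm{\Lambda}$ and hence $\mathbf{T}^{-1}\mathbf{D}\mathbf{T}=\bm{\Lambda}$. The only place demanding genuine care is the block-structure manipulation in condition \ref{eq:T condition 2} together with the reality of the $\mu_n$: it is precisely $\mu_n>0$ that keeps $\mathbf{v}_n$ and $\mathbf{u}_n$ from coinciding and guarantees the $\pm$ pairing of the eigenvalues, so all the pieces of Definition \ref{def:normal mode xform} lock into place.
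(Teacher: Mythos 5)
Your proposal is correct and matches the paper's own proof essentially step for step: both construct $\mathbf{T}$ column-wise as $[\,\mathbf{v}_1 \cdots \mathbf{v}_N \mid \mathbf{u}_1 \cdots \mathbf{u}_N\,]$ and verify the three conditions of Definition \ref{def:normal mode xform} using Lemma \ref{lem:eigenvectors}, the mutual $\mathbf{R}$-orthogonality, and the normalization $\langle\mathbf{v}_n,\mathbf{v}_n\rangle = 1$ to get the diagonal $\bm{\Sigma}$, with $\mathbf{T}^{-1} = \bm{\Sigma}\mathbf{T}^{\H}\mathbf{R}$ yielding condition \textup{[iii]}. Your extra observations---handling $\langle\mathbf{v}_n,\mathbf{u}_n\rangle = 0$ via the antisymmetry relation and noting that $\mu_n > 0$ prevents $\mathbf{v}_n$ and $\mathbf{u}_n$ from coinciding---are slightly more explicit than the paper but do not change the argument.
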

\begin{proof}

We need to show that conditions \ref{eq:T condition}--\ref{eq:T diagonal} in Definition \ref{def:normal mode xform} hold with these assumptions.  To prove \ref{eq:T condition}, we note that since $\mathbf{R}$ is Hermitian, we have
\begin{equation}
	\langle \mathbf{w} , \mathbf{w} \rangle = \langle \mathbf{w} , \mathbf{w} \rangle^{*}.
\end{equation}
Then, by Lemma \ref{lem:eigenvectors}, we have $\langle {\mathbf{u}}_{n} , {\mathbf{u}}_{n} \rangle = -\langle {\mathbf{v}}_{n} , {\mathbf{v}}_{n} \rangle =  - 1$.

If we then construct $\mathbf{T}$ according to
\begin{equation}\label{eq:T soln}
\mathbf{T} = \begin{pmatrix*}[c] \big| & &\big|  & \\ \mathbf{v}_{1} & \cdots & \mathbf{u}_{1} & \cdots \\ \big| & &\big|  & \end{pmatrix*},
\end{equation}
then we see that \ref{eq:T condition} is satisfied.

The proof of \ref{eq:T condition 2} follows directly from the definition (\ref{eq:T soln}) and the the fact that $\mathbf{u}_{n} = \mathbf{\Gamma} \mathbf{v}_{n}^{*}$.

Finally, if we denote the positive eigenvalue associated with $\mathbf{v}_{n}$ as $\mu_{n}$, we have from Lemma \ref{lem:eigenvectors}
\begin{equation}
	\mathbf{D} \mathbf{T} = \mathbf{T} \mathbf{\Lambda}.
\end{equation}
Equation \ref{eq:T condition} implies that $\mathbf{T}^{-1}$ exists and is equal to $\mathbf{\Sigma}\mathbf{T}^\mathrm{H} \mathbf{R}$.  Hence, we have $\mathbf{T}^{-1} \mathbf{D} \mathbf{T} = \mathbf{\Lambda}$, which verifies \ref{eq:T diagonal}.

\end{proof}

\begin{lemma}\label{lem:Omega positive}

Suppose that $\bm{\Omega}$ is positive definite.  Then, a normal mode transformation exists.

\end{lemma}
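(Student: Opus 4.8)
The plan is to reduce the statement to Lemma~\ref{lem:T}: it suffices to exhibit a set $\{\mathbf{v}_{n},\mathbf{u}_{n}\}_{n=1}^{N}$ of mutually $\mathbf{R}$-orthogonal eigenvectors of $\mathbf{D}=\mathbf{R}^{-1}\bm{\Omega}$ with $\mathbf{u}_{n}=\bm{\Gamma}\mathbf{v}_{n}^{*}$, normalised so that $\langle\mathbf{v}_{n},\mathbf{v}_{n}\rangle=1$, and with the eigenvalue attached to each $\mathbf{v}_{n}$ positive. All three hypotheses of Lemma~\ref{lem:T} are then in force and a normal mode transformation exists. The entire task is therefore to manufacture such an eigenbasis from the single assumption that $\bm{\Omega}$ is positive definite, keeping in mind that $\mathbf{R}$ is Hermitian and invertible and that $\bm{\Omega}=\bm{\Omega}^{\H}$ by Lemma~\ref{lem:basic results}.

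First I would establish that $\mathbf{D}$ is diagonalisable with real, nonzero eigenvalues. Because $\bm{\Omega}$ is positive definite it admits a factorisation $\bm{\Omega}=\mathbf{L}\mathbf{L}^{\H}$ with $\mathbf{L}$ invertible, and under the substitution $\mathbf{y}=\mathbf{L}^{\H}\mathbf{v}$ the generalised eigenvalue equation $\bm{\Omega}\mathbf{v}=\mu\mathbf{R}\mathbf{v}$ becomes $\mathbf{N}\mathbf{y}=\mu\mathbf{y}$ with $\mathbf{N}=\mathbf{L}^{\H}\mathbf{R}^{-1}\mathbf{L}$. Since $\mathbf{N}$ is Hermitian it has a complete orthonormal eigenbasis with real eigenvalues, so $\mathbf{D}$ is diagonalisable with real spectrum; moreover $\mathbf{D}=\mathbf{R}^{-1}\bm{\Omega}$ is invertible, so $0$ is not an eigenvalue. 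Next, Lemma~\ref{lem:eigenvectors} shows that $\mathbf{v}\mapsto\bm{\Gamma}\mathbf{v}^{*}$ is an antilinear bijection carrying the $\mu$-eigenspace onto the $(-\mu)$-eigenspace; combined with the absence of a zero eigenvalue and the total dimension $2N$, this forces exactly $N$ positive and $N$ negative eigenvalues counted with multiplicity.

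The two remaining ingredients exploit positivity of $\bm{\Omega}$ together with the self-adjointness of $\mathbf{D}$ with respect to $\langle\cdot,\cdot\rangle$. From $\mathbf{R}\mathbf{D}=\bm{\Omega}=\mathbf{D}^{\H}\mathbf{R}$ one obtains $\langle\mathbf{w},\mathbf{D}\mathbf{v}\rangle=\langle\mathbf{D}\mathbf{w},\mathbf{v}\rangle$, so eigenvectors belonging to distinct (real) eigenvalues are automatically $\mathbf{R}$-orthogonal. On a single positive eigenspace $E_{\mu}$ one has $\mathbf{v}^{\H}\bm{\Omega}\mathbf{v}'=\mu\langle\mathbf{v},\mathbf{v}'\rangle$, and since $\bm{\Omega}>0$ this makes the restriction of $\langle\cdot,\cdot\rangle$ to $E_{\mu}$ positive definite (and negative definite on $E_{-\mu}$). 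I would then choose, within each positive eigenspace, an $\mathbf{R}$-orthonormal basis $\{\mathbf{v}_{n}\}$ by Gram--Schmidt and set $\mathbf{u}_{n}=\bm{\Gamma}\mathbf{v}_{n}^{*}$; Lemma~\ref{lem:eigenvectors} then yields $\langle\mathbf{u}_{n},\mathbf{u}_{m}\rangle=-\langle\mathbf{v}_{n},\mathbf{v}_{m}\rangle^{*}=-\delta_{nm}$, while all cross terms vanish by the distinct-eigenvalue orthogonality. Collecting these pairs over every positive eigenvalue produces the $2N$ mutually $\mathbf{R}$-orthogonal vectors demanded by Lemma~\ref{lem:T}, completing the argument.

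I expect the main obstacle to be the bookkeeping around degenerate eigenvalues: one cannot assume the spectrum is simple, so the $\mathbf{R}$-orthonormal bases must be assembled eigenspace by eigenspace, and this is precisely the step where positive definiteness of $\bm{\Omega}$ is indispensable, since it is what guarantees the restricted $\mathbf{R}$-form is definite (so that Gram--Schmidt never stalls on an $\mathbf{R}$-null vector). Routing the diagonalisability claim through the Hermitian reduction $\mathbf{N}$ is what sidesteps any concern about defective eigenvalues, which would otherwise be the delicate point.
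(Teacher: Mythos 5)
Your proposal is correct and takes essentially the same route as the paper's proof: factor the positive definite $\bm{\Omega}$, conjugate $\mathbf{D}$ to a Hermitian matrix (your $\mathbf{N}=\mathbf{L}^{\H}\mathbf{R}^{-1}\mathbf{L}$ is the paper's $\mathbf{Q}=(\mathbf{F}^{\H})^{-1}\mathbf{R}\mathbf{F}^{-1}$ up to inversion) to obtain a real, nonzero, mutually $\mathbf{R}$-orthogonal eigensystem, use Lemma \ref{lem:eigenvectors} to pair the $\pm\mu_{n}$ eigenvectors and the identity $\mathbf{v}^{\H}\bm{\Omega}\mathbf{v}=\mu\langle\mathbf{v},\mathbf{v}\rangle$ to get positive $\mathbf{R}$-norms for normalization, then invoke Lemma \ref{lem:T}. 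Your explicit Gram--Schmidt within degenerate eigenspaces and the distinct-eigenvalue argument for $\langle\mathbf{v}_{n},\mathbf{u}_{m}\rangle=0$ are presentational variants of what the paper obtains from the spectral theorem for $\mathbf{Q}$ and from combining the two eigenvalue equations into $(\mu_{n}+\mu_{m}^{*})\langle\mathbf{v}_{n},\mathbf{u}_{m}\rangle=0$.
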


\begin{proof}

Because $\mathbf{\Omega}$ is positive definite, we can write $\bm{\Omega} = \mathbf{F}^{\H} \mathbf{F}$ where $\mathbf{F}$ is an invertible matrix.  Since $\mathbf{F}$ is invertible, then $\mathbf{F}^{\H}$ is also invertible.  From this, it follows that $\textbf{D}$ is invertible and $\mathbf{D}^{-1} =  \mathbf{F}^{-1} (\mathbf{F}^{\H})^{-1}  \mathbf{R}$.  This implies that the spectrum of $\textbf{D}$ does not include zero.  

Hence, if $\mathbf{w}$ is an eigenvector of $\mathbf{D}$ with eigenvalue $\mu$, we can write:
\begin{equation}
	\mathbf{Q} \mathbf{y} = \mu^{-1} \mathbf{y}, \quad \mathbf{y} = \mathbf{F} \mathbf{x}, \quad \mathbf{Q} = (\mathbf{F}^{\H})^{-1} \mathbf{R} \mathbf{F}^{-1}.
\end{equation}
Since $\mathbf{Q}$ is obviously Hermitian, we are guaranteed that we can find linearly independent eigenvectors $\mathbf{y}_{n}$ of $\mathbf{Q}$ with real non-zero eigenvalues that are mutually orthogonal under the Euclidean inner-product.  This then implies that the eigenvectors of $\mathbf{D}$ (given by $\mathbf{w}_{n} = \mathbf{F}^{-1} \mathbf{y}_{n}$) satisfy relations
\begin{equation}
	\mathbf{w}_{n}^{\H} \bm{\Omega} \mathbf{w}_{m} = 0 \implies \langle \mathbf{w}_{n}, \mathbf{w}_{m} \rangle = 0,
\end{equation}
when $n \ne m$.  Therefore, $\mathbf{D}$ possesses a set $S$ of eigenvectors that are mutually orthogonal under the $\mathbf{R}$-inner product.  Furthermore, Lemma \ref{lem:eigenvectors} and the fact zero is not an eigenvalue of $\mathbf{D}$ implies that half of the eigenvectors in $S$ must be associated with positive eigenvalues, and the other half must be associated with negative eigenvalues. 

Now, if we denote the eigenvectors in $S$ with positive eigenvalues $\mu_{n}>0$ by $\tilde{\mathbf{v}}_{n}$, then we have
\begin{equation}
	 \langle \tilde{\mathbf{v}}_{n} , \tilde{\mathbf{v}}_{n} \rangle = \frac{\tilde{\mathbf{v}}_{n}^{\H} \mathbf{F}^{\H} \mathbf{F} \tilde{\mathbf{v}}_{n}}{\mu_{n} }.
\end{equation}
The righthand side is manifestly positive, so we conclude that $\mathbf{R}$-norm of $\tilde{\mathbf{v}}_{n}$ is positive.

Hence, we can construct a new set of eigenvectors $B = \{  \mathbf{v}_{n},  \mathbf{u}_{n} \}_{n=1}^{N}$ with $\mathbf{v}_{n} = {\tilde{\mathbf{v}}_{n}}/{\sqrt{\langle \tilde{\mathbf{v}}_{n} , \tilde{\mathbf{v}}_{n} \rangle}}$,
$\mathbf{u}_{n} = \mathbf{\Gamma} \mathbf{v}_{n}^{*}$, and
\begin{equation}
	 \langle {\mathbf{v}}_{n} , {\mathbf{v}}_{n} \rangle = 1.
\end{equation}
Furthermore, since the ${\mathbf{v}}_{n}$ are just scalar multiples of vectors in $S$, they are mutually orthogonal under the $\mathbf{R}$-inner product.  Lemma \ref{lem:eigenvectors} then implies that the $\mathbf{u}_{n}$ vectors are also mutually orthogonal.

The eigenvalue equations for ${\mathbf{v}}_{n}$ and ${\mathbf{u}}_{m}$ are
\begin{align}
	0 & =  (\mathbf{\Omega}-\mu_{n} \mathbf{R}){\mathbf{v}}_{n}, \\
	0 & = (\mathbf{\Omega}+\mu_{m} \mathbf{R}){\mathbf{u}}_{m}.
\end{align}
Combining these, we get
\begin{equation}
	(\mu_{n}+\mu_{m}^{*}) \langle {\mathbf{v}}_{n} , {\mathbf{u}}_{m} \rangle = 0.
\end{equation}
Since the $\mu_{n}$ are strictly real and positive, this gives $ \langle {\mathbf{v}}_{n} , {\mathbf{u}}_{m} \rangle = 0$.

Therefore, $B = \{  \mathbf{v}_{n},  \mathbf{u}_{n} \}_{n=1}^{N}$  is a set of mutually $\mathbf{R}$-orthogonal eigenvectors of $\mathbf{D}$, the $\mathbf{R}$-norm of $\mathbf{v}_{n}$ is +1, and the eigenvalue of $\mathbf{v}_{n}$ is positive; by Lemma \ref{lem:T}, a normal mode transformation exists.

\end{proof}

\begin{lemma}\label{lem:5}

Suppose that a normal mode transformation exists.  Then, $\mathbf{\Omega}$ is positive definite.

\end{lemma}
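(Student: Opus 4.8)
The plan is to run the argument of Lemma \ref{lem:Omega positive} in reverse: rather than manufacturing a transformation from a positive-definite $\bm\Omega$, I would use the three defining properties of a normal mode transformation to exhibit $\bm\Omega$ as a congruence of a manifestly positive-definite diagonal matrix. Since positive-definiteness is preserved under congruence by an invertible matrix, the conclusion follows immediately. The Hermiticity of $\bm\Omega$, needed for the notion of positive-definiteness to apply, is already supplied by Lemma \ref{lem:basic results}.

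First I would record that, because $\mathbf{D} = \mathbf{R}^{-1}\bm\Omega$, we have $\bm\Omega = \mathbf{R}\mathbf{D}$. The central computation is then to evaluate the congruence $\mathbf{T}^{\H}\bm\Omega\mathbf{T}$. Inserting $\mathbf{T}\mathbf{T}^{-1} = \mathbf{I}$ between $\mathbf{R}$ and $\mathbf{D}$ and regrouping gives
\begin{equation}
	\mathbf{T}^{\H}\bm\Omega\mathbf{T} = \mathbf{T}^{\H}\mathbf{R}\mathbf{D}\mathbf{T} = (\mathbf{T}^{\H}\mathbf{R}\mathbf{T})(\mathbf{T}^{-1}\mathbf{D}\mathbf{T}).
\end{equation}
The two parenthesized factors are precisely the quantities fixed by the defining conditions \ref{eq:T condition} and \ref{eq:T diagonal} of Definition \ref{def:normal mode xform}, namely $\mathbf{T}^{\H}\mathbf{R}\mathbf{T} = \bm\Sigma$ and $\mathbf{T}^{-1}\mathbf{D}\mathbf{T} = \bm\Lambda$. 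Hence $\mathbf{T}^{\H}\bm\Omega\mathbf{T} = \bm\Sigma\bm\Lambda$.

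Next I would observe that $\bm\Sigma\bm\Lambda = \text{diag}(\mu_{1},\ldots,\mu_{N},\mu_{1},\ldots,\mu_{N})$: the lower block of $\bm\Sigma$ contributes a sign $-1$ that cancels the $-\mu_{n}$ entries of $\bm\Lambda$, leaving all entries equal to the positive numbers $\mu_{n}$. Since each $\mu_{n} > 0$ by hypothesis, this diagonal matrix is positive definite. The matrix $\mathbf{T}$ is invertible (indeed $\mathbf{T}^{-1} = \bm\Sigma\mathbf{T}^{\H}\mathbf{R}$ is built into Definition \ref{def:normal mode xform}, and independently follows from taking determinants in condition \ref{eq:T condition} with $\mathbf{R}$ invertible). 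Therefore, for any nonzero vector $\mathbf{x}$ I may set $\mathbf{y} = \mathbf{T}^{-1}\mathbf{x} \ne 0$ and conclude
\begin{equation}
	\mathbf{x}^{\H}\bm\Omega\mathbf{x} = \mathbf{y}^{\H}(\mathbf{T}^{\H}\bm\Omega\mathbf{T})\mathbf{y} = \mathbf{y}^{\H}\bm\Sigma\bm\Lambda\mathbf{y} = \sum_{n=1}^{N}\mu_{n}\big(|y_{n}|^{2}+|y_{N+n}|^{2}\big) > 0,
\end{equation}
which establishes that $\bm\Omega$ is positive definite.

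There is no deep obstacle here; the content is entirely in the bookkeeping, and the lemma is a clean converse to Lemma \ref{lem:Omega positive}. The single point that warrants care is the order of operations in the regrouping: the identity $\mathbf{T}^{\H}\bm\Omega\mathbf{T} = (\mathbf{T}^{\H}\mathbf{R}\mathbf{T})(\mathbf{T}^{-1}\mathbf{D}\mathbf{T})$ rests on writing $\bm\Omega = \mathbf{R}\mathbf{D}$ in that particular order, so I would be careful not to commute $\mathbf{R}$ and $\mathbf{D}$, which do not commute in general. I would also note that the auxiliary relation $\mathbf{T}^{-1} = \bm\Sigma\mathbf{T}^{\H}\mathbf{R}$ is not logically required for this direction beyond securing invertibility, though it provides an alternative route to the same identity.
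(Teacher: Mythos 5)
Your proof is correct and follows essentially the same route as the paper: both hinge on the congruence identity $\mathbf{T}^{\H}\bm{\Omega}\mathbf{T} = (\mathbf{T}^{\H}\mathbf{R}\mathbf{T})(\mathbf{T}^{-1}\mathbf{D}\mathbf{T}) = \bm{\Sigma}\bm{\Lambda} = \mathrm{diag}(\mu_{1},\ldots,\mu_{N},\mu_{1},\ldots,\mu_{N})$ obtained from conditions [i] and [iii]. The only cosmetic difference is the last step, where the paper factors this diagonal matrix through its square root $\bm{\Theta}$ to write $\bm{\Omega} = \mathbf{F}^{\H}\mathbf{F}$ with $\mathbf{F} = \bm{\Theta}\mathbf{T}^{-1}$ invertible, whereas you evaluate the quadratic form $\mathbf{x}^{\H}\bm{\Omega}\mathbf{x}$ directly via $\mathbf{y} = \mathbf{T}^{-1}\mathbf{x}$.
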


\begin{proof}

Since all the $\mu_{n}$ are positive when a normal mode transformation exists, we can define a matrix $\bm{\Theta}$ as follows:
\begin{gather}\nonumber
	\bm{\Theta}^{2} = \bm{\Sigma} \bm{\Lambda}, \\ \bm{\Theta} = \textup{diag}(\mu_{1}^{1/2},\ldots,\mu_{N}^{1/2},\mu_{1}^{1/2},\ldots,\mu_{N}^{1/2}).
\end{gather}
Furthermore, properties \ref{eq:T condition} and \ref{eq:condition 1} of a normal mode transformation can be combined to show
\begin{equation}
	\mathbf{T}^{\H} \bm{\Omega} \mathbf{T} = \bm{\Theta}^{2}.
\end{equation} 
This may be re-arranged to yield 
\begin{equation}
	 \bm{\Omega} = \mathbf{F}^{\H} \mathbf{F}, \quad \mathbf{F} = \bm{\Theta} \mathbf{T}^{-1}.
\end{equation}
$\mathbf{F}$ is obviously invertible, which implies that $\mathbf{\Omega}$ is positive definite.
	
\end{proof}

\begin{lemma}

A normal mode transformation exists if and only if $\mathbf{\Omega}$ is positive definite.

\end{lemma}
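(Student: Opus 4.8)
The plan is to recognize that this biconditional is nothing more than the conjunction of the two implications established in the two immediately preceding lemmas, so no genuinely new argument is required. The ``if'' direction---that positive-definiteness of $\bm{\Omega}$ guarantees the existence of a normal mode transformation---is precisely the content of Lemma \ref{lem:Omega positive}. The ``only if'' direction---that the existence of a normal mode transformation forces $\bm{\Omega}$ to be positive definite---is exactly Lemma \ref{lem:5}. I would therefore write the proof as a one-line appeal to these two results.

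Concretely, I would first invoke Lemma \ref{lem:Omega positive} to dispatch the forward half: assuming $\bm{\Omega}$ is positive definite, that lemma produces a mutually $\mathbf{R}$-orthogonal eigenbasis of $\mathbf{D}$ (by diagonalizing the Hermitian matrix $\mathbf{Q} = (\mathbf{F}^{\H})^{-1}\mathbf{R}\mathbf{F}^{-1}$ with $\bm{\Omega} = \mathbf{F}^{\H}\mathbf{F}$) and then assembles $\mathbf{T}$ through Lemma \ref{lem:T}. I would then invoke Lemma \ref{lem:5} for the converse half: given any normal mode transformation, the factorization $\bm{\Omega} = \mathbf{F}^{\H}\mathbf{F}$ with the invertible $\mathbf{F} = \bm{\Theta}\mathbf{T}^{-1}$ exhibits $\bm{\Omega}$ as positive definite. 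Since the two lemmas are phrased in terms of the same objects---the matrix $\mathbf{D} = \mathbf{R}^{-1}\bm{\Omega}$ and the same Definition \ref{def:normal mode xform} of a normal mode transformation---they are genuine converses of one another, and there is no main obstacle to overcome. The only point worth stating explicitly is that combining the two exhausts both directions of the equivalence, which completes the proof.
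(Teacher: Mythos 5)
Your proposal is correct and matches the paper's proof exactly: the paper also dispatches this biconditional as an immediate consequence of Lemma \ref{lem:Omega positive} (the ``if'' direction) and Lemma \ref{lem:5} (the ``only if'' direction). Your additional recap of the internal mechanics of those two lemmas is accurate but not needed for the one-line argument.
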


\begin{proof}

This follows directly from Lemmas \ref{lem:Omega positive} and \ref{lem:5}.

\end{proof}

\begin{acknowledgments}

We are supported by NSERC of Canada.  In addition, this research was supported in part by Perimeter Institute for Theoretical Physics. Research at Perimeter Institute is supported by the Government of Canada through Innovation, Science and Economic Development Canada and by the Province of Ontario through the Ministry of Research, Innovation and Science.

\end{acknowledgments}

\vfill

 \bibliography{cavity}
\end{document}